\documentclass[AER,draftmode]{AEA}
\newcommand{\falsificationShare}{51\%}
\newcommand{\ncoShareOfFalsification}{75\%}
\newcommand{\nciShareOfFalsification}{24\%}
\newcommand{\otherShareOfFalsification}{21\%}
\newcommand{\poShareOfNco}{57\%}

\newcommand{\ivControlInPiTest}{24\%}

\newcommand{\ncCountOne}{35\%}

\usepackage{array}
\newcolumntype{P}[1]{>{\raggedright\arraybackslash}p{#1}}
\usepackage{appendix}
\usepackage[margin=1in]{geometry}  
\usepackage{graphicx}
\draftSpacing{1.2}


\usepackage{amsthm}
\usepackage{amsmath,bbm}

\usepackage{cancel}
\usepackage{enumitem}
\usepackage[hidelinks]{hyperref}
\usepackage{natbib}
\usepackage{soul}
\usepackage{amsfonts} 
\usepackage{silence}
\WarningFilter{caption}{Unknown document class (or package)}
\usepackage{caption}  
\usepackage[font=small]{subcaption} 
\captionsetup{labelsep=none}

\usepackage{booktabs}
\usepackage{adjustbox} 
\usepackage{etoolbox}  
\usepackage{longtable} 

\usepackage{pdflscape}
\usepackage{makecell}
\usepackage{mdframed}




\hyphenation{unconditional}

\newcommand{\floatscale}{.95} 

\newcommand{\exhibitnotes}[1]{
\begin{minipage}{\textwidth}
\medskip
\footnotesize
\textit{Notes:} #1\par
\end{minipage}
}
\renewcommand{\figurename}{}
\renewcommand{\thefigure}{Figure~\arabic{figure}}


\renewcommand{\tablename}{}
\renewcommand{\thetable}{Table~\arabic{table}}

\usepackage{tikz}
\usepackage{pgf}
\usetikzlibrary{positioning, calc, shapes.geometric, shapes.multipart, shapes, arrows.meta, arrows, 	decorations.markings, external, trees}
\tikzstyle{Arrow} = [
thick,
decoration={
	markings,
	mark=at position 0.99 withe {  
		\arrow[thick]{latex}
	}
},
shorten >= 3pt, preaction = {decorate}
]
\tikzstyle{RedArrow} = [
red, 
line width = 0.5mm,
decoration={
	markings,
	mark=at position 0.99 with {
		\arrow[line width = 0.5mm,red]{latex}
	}
},
shorten >= 3pt, preaction = {decorate}
]

\tikzstyle{BlueArrow} = [
blue, 
line width = 0.5mm,
decoration={
	markings,
	mark=at position 0.99 with {
		\arrow[line width = 0.5mm,blue]{latex}
	}
},
shorten >= 3pt, preaction = {decorate}
]

\tikzstyle{DashedRedArrow} = [
thick,red,dashed,
decoration={
	markings,
	mark=at position 0.99 with {  
		\arrow[thick,red]{latex}
	}
},
shorten >= 3pt, preaction = {decorate}
]


\newtheorem{thm}{Theorem}
\newtheorem{lem}{Lemma}
\newtheorem{dfn}{Definition}
\newtheorem{ass}{Assumption}

\newtheorem{corollary}{Corollary}

\usepackage[status=draft,inline,index]{fixme}
\fxsetup{theme=color,margin=false,mode=multiuser}
\FXRegisterAuthor{bw}{bwe}{BW}
\FXRegisterAuthor{dn}{dne}{DN}
\FXRegisterAuthor{od}{ode}{OD}
\FXRegisterAuthor{dz}{dze}{DZ}
\FXRegisterAuthor{iw}{iwe}{IW}

\DeclareMathOperator*{\argmin}{arg\,min}
\def \bbE {\mathbb{E}}\newcommand\indep{\protect\mathpalette{\protect\independenT}{\perp}}
\def\independenT#1#2{\mathrel{\rlap{$#1#2$}\mkern2mu{#1#2}}}

\newcommand{\beginQuotable}[1][quote name]{\phantomsection\label{Q:#1}}
\def\endQuotable{}


\title{Negative Control Falsification Tests for\\ Instrumental Variable Designs\thanks{We thank Maya Orenstein for excellent research assistance. We thank Kirill Borusyak, Yoav Goldstein, Matan Kolerman,  David Lee, Ro'ee Levy, Ashesh Rambachan, Tamir Zehavi, and seminar participants at the ASSA 2023 Meeting, EUROCIM 2024, Princeton University, Stanford University, Technion, Hebrew University of Jerusalem, Tel Aviv University, EIEF Conference and SOLE 2024 meeting for their helpful comments and suggestions. We acknowledge support from the Tel Aviv University Center for AI and Data Science (TAD) in collaboration with Google as part of the initiative of AI and DS for social good and from the Sapir Center for Development.}}    

\shortTitle{Negative Controls for IV Designs}

\author{Oren Danieli, Daniel Nevo, Itai Walk, Bar~Weinstein, Dan~Zeltzer\thanks{All authors are from Tel Aviv University. Danieli is the corresponding author. Email: orendanieli@tauex.tau.ac.il.}}

\date{\today}
\pubMonth{}
\pubYear{}
\pubVolume{}
\pubIssue{}
\JEL{}
\Keywords{}

\begin{abstract}
The validity of instrumental variable (IV) designs is typically tested using two types of falsification tests. We characterize these tests as conditional independence tests between negative control variables---proxies for unobserved variables posing a threat to the identification---and the IV or the outcome. We describe the conditions that variables must satisfy in order to serve as negative controls. We show that these falsification tests examine not only independence and the exclusion restriction, but also functional form assumptions. 
Our analysis reveals that conventional applications of these tests may flag problems even in valid IV designs.  We offer implementation guidance to address these issues.
\end{abstract}
\begin{document}

\maketitle


The identification assumptions in instrumental variable (IV) designs cannot be directly tested. Instead, researchers often use indirect falsification, or ``placebo,'' tests. 
Reviewing the most-cited papers in five leading economics journals, we find that \falsificationShare~of IV studies employ such falsification tests.
The large majority of falsification tests fall into two categories: \ncoShareOfFalsification~of papers that implemented a falsification test examined that the IV is not associated with certain variables such as lagged outcomes, which we call \emph{negative control outcomes}, (NCOs). Similarly,   \nciShareOfFalsification~examined that the outcome is not associated with other variables, which we call \emph{negative control instruments} (NCIs).
For example, they tested that the outcome is not correlated with variables that resemble the IV but do not affect the treatment. \beginQuotable[CiteLitAtStart]Extensive literature has developed a theoretical framework for negative control falsification tests in classic causal settings \citep{lipsitch2010negative,shi2020selective}, but not for the assumptions underlying IV designs.\footnote{In epidemiology and biomedical fields, researchers use the similar terminology of negative controls for falsification tests to detect potential confounding in an exposure-outcome relationship.}\endQuotable{} 
This paper aims to fill this gap. 

We propose that practitioners first identify potential threats to their IV validity, which we formally define. Since variables posing these threats are typically unobserved, researchers should look for proxy variables, termed \textit{negative controls}. We characterize the conditions that these proxies must satisfy and show how to use them to test the validity of the IV design using (conditional) independence tests.
The theory highlights two common pitfalls, which can falsely flag problems in valid IV designs. First, NCI tests typically require conditioning on the IV---a step frequently overlooked in practice. Second, prevalent negative control tests may flag violations of unnecessary or replaceable functional form assumptions. We propose ways to separately test the validity of the IV design from these functional form assumptions. Our framework also suggests novel, underutilized negative control variables and testing methods. 

\beginQuotable[APVIntro]
We first introduce the concept of an \emph{alternative path variable}, which represents a variable that poses a threat to the identification. In valid IV designs, which satisfy independence and the exclusion restriction, the only path between the IV and the outcome is through the treatment.\footnote{The independence assumption typically includes independence of the IV with the potential treatment and the potential outcome \citep{abadie2003semiparametric}. The falsification tests we discuss in this paper focus only on outcome independence. See Section \ref{subsec:threat}.} Threats to identification can be characterized as alternative paths between the IV and the outcome through an alternative path variable, rather than through the treatment.
\endQuotable{}

To construct a negative control test, researchers need to determine which type of alternative path threatens the IV design. We distinguish between two categories of variables that can create such paths. In the first category, alternative path outcome (APO) variables, the concern is that a variable that is associated with the outcome would also be associated with the IV. \ref{fig:dags_intro} illustrates two examples of such cases.\footnote{Throughout the paper, we use directed acyclic graphs (DAGs) to visualize complex structures, as advocated by \cite{imbens2020potential}. \ref{AppSec:DAG} outlines the theory presented in this paper within the formal causal DAG framework \citep{pearl2009causality}} Panel~A shows a potential violation of the independence assumption. For concreteness, consider the context of \cite{martin2017bias}, who examine the impact of Fox News viewership ($X$) on Republican vote shares ($Y$). As an IV, they use the local Fox News cable channel position ($Z$) since lower channel numbers induce higher viewership.
One concern is that unobserved local conservativeness ($U$, the APO variable) affects not only the republication vote share (the outcome) but also the channel position (the IV), as marked with the dashed arrow. If that is the case, an alternative path between the channel position and Republican voting share exists via local conservativeness, the APO variable. This would violate the independence assumption and invalidate the design.
Panel~B offers an example of an APO variable ($U_2$) that is part of a potential violation of the exclusion restriction assumption. 

In the second category, alternative path instrument (API) variables, the concern is that variables known to be associated with the IV would also be associated with the outcome. Panel~C  of \ref{fig:dags_intro} provides an example of an API variable that potentially violates the independence assumption. For concreteness, consider the context of \cite{nunn2014us}, who examine the effect of US food aid ($X$) on conflicts in recipient countries ($Y$). They use the US production of wheat ($Z$), a staple aid crop, as an IV for aid. Here, the API variable is unobserved weather conditions ($U_3$).  It is known that weather conditions affect wheat production, the IV. The question is whether they also affect conflicts, the outcome, as marked with the dashed arrow. If so, an alternative path from wheat production to conflict exists via the API variable. 
Panel~D offers an example of an API variable ($U_4$) that is part of a potential violation of the exclusion restriction assumption. 

\beginQuotable[NCIntro]
Since alternative path variables of both categories are often unobserved, researchers can utilize negative control variables as proxies for them. A negative control outcome (NCO) is a proxy for an APO variable. An association between an NCO and the IV implies the presence of an alternative path, indicating that the design is not valid. In \citeauthor{martin2017bias}, the lagged outcome, Republican vote share in 1996 ($NC_1$ in \ref{fig:dags_intro}) is used as an NCO for unobserved conservativeness ($U_1$). If 1996 vote shares correlate with channel position, it would imply that cable companies consider the population conservativeness when placing the channels, such that the dashed arrow exists. Hence, there is an alternative path between the IV and the outcome, violating the independence assumption. Panel~A of \ref{tab_intro_examples} describes further examples of applications of NCO variables in economic research. 

Similarly, a negative control instrument (NCI) can be used as a proxy for an API variable. For example, \citeauthor{nunn2014us} use orange production as an NCI ($NC_3$ in \ref{fig:dags_intro}) for unobserved weather conditions ($U_3$)---orange production is affected by similar weather conditions as wheat but is not used for food aid. If orange production were associated with conflicts conditional on wheat production, this would indicate an alternative path between the IV and the outcome, violating the independence assumption.
Panel B of \ref{tab_intro_examples} lists additional examples of NCI variables in economic research. \endQuotable{}

The definition of negative controls clarifies which proxy variables researchers can use as NCOs or NCIs. This definition guarantees that these proxies can test for alternative paths using (conditional) independence tests. In particular, variables directly associated with the IV (not through the APO variable) cannot serve as NCOs. 
Similarly, variables associated directly with the outcome (not through the API variable or the IV) cannot serve as NCIs. Tests using such variables would falsely flag valid IV designs.

The theory highlights two common pitfalls in current practice. First, in our survey, the vast majority of papers using NCI variables implemented a test that will falsely flag problems in valid IV designs (with sufficient sample size). In almost all these papers, the NCI was a variable that is similar to the IV but does not affect the treatment (such as orange instead of wheat production in the previous example). Researchers typically test whether such an NCI is correlated with the outcome by plugging it instead of the original IV in the reduced form equation. This specification overlooks a key issue: NCIs are typically correlated with the original IV and therefore will be associated with the outcome due to this correlation, even in a valid IV design. For example, as shown in Panel C of \ref{fig:dags_intro}, both orange production ($NC_3$) and wheat production ($Z$) are influenced by weather conditions ($U_3$). This means that orange production and conflict ($Y$) would be correlated even if wheat production is a valid IV (i.e., the dashed arrow does not exist and there is no alternative path).
We show that this problem can be avoided by controlling for the original IV in the NCI test. In the above example, this means controlling for wheat production when testing the association between orange production and conflict. 

The second pitfall is that negative control tests may flag problems even in valid IV designs due to a misspecified functional form. 
In 2SLS specifications, researchers must choose a functional form for the IV and the control variables, typically assuming a simple linear-additive model. This structure often carries over into the execution of negative control tests. 
Consequently, even valid IV designs may fail negative control tests merely due to violations of functional form assumptions.
However, unlike the independence and exclusion assumptions, which are necessary for IV validity, functional form assumptions can often be relaxed and, in some cases, are unnecessary for identifying causal effects. 
To address this problem, researchers can use alternative negative control tests that rely on weaker functional form assumptions.  

The theory can also be used to identify new types of negative control variables, some of which have yet to be commonly employed in empirical research. For example, variables that causally influence the IV could serve as NCIs (e.g., observed weather conditions can serve as NCIs when the IV is wheat production).

\beginQuotable[FirstLitReview]
This paper adds to prior econometrics work on tests for IV design validity and, more generally, the validity of causal designs. Recent work has suggested novel tests to examine the validity of IV designs \citep{kitagawa2015test, huber2015testing, mourifie2017testing, frandsen2023judging, chyn2024examiner}. Previous work has also discussed robustness tests, not specific to IV, based on varying the set of controls \citep{altonji2005selection, oster2019unobservable, diegert2022assessing}. However, \citet{pei2019poorly} recommends using such control variables as NCOs instead. \cite{eggers2021placebo} discuss the usage of placebo tests in the social sciences more broadly. We contribute to this literature by outlining a theoretical framework for the most common type of falsification tests for IV designs.
\endQuotable{}

\beginQuotable[LitReview]
This paper also contributes to the growing literature on negative controls \citep{lipsitch2010negative,shi2020selective}. In a standard causal design, negative controls are used to detect or even correct for bias due to unobserved treatment-outcome confounding. To this end, valid and, in some cases, even invalid IVs can 
serve as \textit{negative control exposures}, 
and under further assumptions can be combined with an additional negative control to achieve point identification 
\citep{miao2018identifying, shi2020selective, tchetgen2024introduction, dukes2024using}.
We apply the theory of negative controls in a different setting, where researchers employ an IV design and seek to use negative controls specifically for testing the IV assumptions in order to assess the design's validity. 
Our work is related to \cite{davies2017compare}, who use negative controls in IV contexts without developing a theoretical framework for such an approach. \endQuotable{}We find several important differences in the theoretical framework of negative controls for assessing IVs compared to their usage in the standard treatment-outcome setting.

The rest of this paper proceeds as follows. Section~\ref{sec:survey} surveys the current practice of falsification tests for IV designs. Section~\ref{Sec:theory} presents the theory for negative control tests in IV designs. Section~\ref{sec:practice} provides guidance for practitioners and demonstrates key findings using recent empirical studies. Section~\ref{sec:conclusion} concludes.
 
\section{Survey of Current Practice}
\label{sec:survey}
To provide an overview of current practices in falsification testing for IV designs, we surveyed the most highly cited articles with an IV analysis published between 2013 and 2023 in top economics journals. We then classified the characteristics of the falsification tests used. \ref{AppSec:survey} provides additional details on the survey construction, and the results are summarized in \ref{tab_survey}.

We highlight five key findings from this survey. 
First, falsification tests are widely used in IV analyses. Approximately half (\falsificationShare) of all articles surveyed employ some form of falsification test (Column~2 of \ref{tab_survey}).

Second, most falsification tests fall within the negative control framework described in the introduction and formalized in this paper. As outlined earlier, these tests can be divided into two types: \emph{negative control outcome} (NCO) tests, which check for associations between the IV and variables it should not be associated with, and \emph{negative control instrument} (NCI) tests, which examine associations between the outcome and variables it should not be associated with. Among surveyed papers using falsification tests, \ncoShareOfFalsification~used NCO tests (Column~3) and~\nciShareOfFalsification~used NCI tests (Column~4). All other types of falsification tests combined were used in \otherShareOfFalsification~of the papers (Column~5); \ref{AppSec:survey} lists these other, less common types. 

Third, current applied work usually restricts itself to two simple types of negative control test specifications that rely on the 2SLS functional form assumptions. NCO tests typically involve estimating a revised reduced form equation using an alternative outcome (e.g., a lagged outcome) and testing if it is unrelated to the IV (Column 6). Such specifications account for \poShareOfNco ~of all NCO tests. The remaining NCO tests often follow a similar logic.\footnote{For example, balance tables that regress various NCOs on the IV.}  
For NCI tests, researchers either replace the original IV in the reduced form equation with a similar variable that does not affect the treatment (Column~7), or add this variable to the reduced form equation (Column~8). 

Fourth, most reported NCI tests are implemented incorrectly. As noted in the introduction and discussed in detail later, NCI tests should almost always control for the original IV. In practice, only \ivControlInPiTest ~of the papers surveyed reported doing so. With sufficient sample size, this error leads to finding false problems in valid IV designs. It is likely that additional NCI tests were conducted incorrectly, finding false problems in valid designs, and were therefore not reported.

Finally, papers using falsification tests usually utilize only a few negative control variables. The median number of negative control variables used in the surveyed papers is 3.5 (Column~9), with \ncCountOne~of of the papers using only one. This finding suggests that researchers use only a subset of the available relevant negative controls. As we demonstrate in Section~\ref{sec:practice}, the theory can guide a systematic search for negative control variables in existing data and suggest novel types of negative control variables researchers can use to evaluate their IV designs.

\section{Theory of Negative Controls in IV Settings}
\label{Sec:theory}

In this section, we present the theory of negative control tests for IV designs. 
The theory is constructed using the terminology of potential outcomes, and we use DAGs for examples and intuition. \ref{AppSec:DAG} introduces the basic relevant concepts for DAG theory and replicates the key definitions and theorems using DAGs. 

\beginQuotable[threatSection]
\subsection{The Threats to the Identification}\label{subsec:threat}
\subsubsection{IV Assumptions}
\label{SubSec:setup}
Consider i.i.d. units indexed by $i=1,\ldots,n$. Denote the observed (endogenous) treatment status by $X_i$, and the candidate IV by $Z_i$. Let $Y_i(z,x)$ be the potential outcome for unit $i$ had $Z_i$ and $X_i$ been jointly set to the values $z$ and $x$, respectively.\footnote{This formulation implicitly assumes the stable unit treatment value assumption (SUTVA).}
We make the standard assumption that the observed outcome $Y_i$ is given by $Y_i=Y_i(Z_i,X_i)$. Because units are assumed to be i.i.d., we omit the subscript $i$ when it improves clarity. All variables may be discrete or continuous.

The negative control tests we discuss in this paper examine whether there is an alternative path between the IV and the outcome, in addition to the standard path through the treatment. Such an additional path would violate one of the following two assumptions.
The first assumption, \emph{outcome independence}, maintains that IV assignment is independent of the potential outcomes.
\begin{ass}[Outcome independence] For all $z$ and $x$,
\label{Ass:Indep}
$Z \indep Y(z,x).$
\end{ass} 
\noindent This assumption is usually written as part of a more general independence assumption \citep[e.g.,][]{abadie2003semiparametric}. Here, we distinguish between outcome independence and \emph{treatment independence}, which requires $Z \indep X(z)$ for every value of $z$. 
Only outcome independence is tested in the negative control tests we discuss in this paper.

Outcome independence is violated if the IV is affected by a variable that also affects the outcome. As previously discussed, \citet{martin2017bias} study the effect of Fox News on voting using channel positions as an IV. This example is illustrated in Panel A of \ref{fig:dags_intro}. The concern is that cable companies accounted for local conservativeness ($U_1$) when assigning Fox News channel position ($Z$), as illustrated by the dashed arrow. If so, an alternative path emerges between channel position and voting ($Y$) through local conservativeness. This path violates outcome independence. 

The second assumption, \emph{exclusion restriction}, maintains that the IV does not have a direct effect on the outcome.
\beginQuotable[Yx]Let $Y(x)$ be the potential outcome had the treatment $X$ been set to $x$, while $Z$ had not been set to any particular value and takes its natural value, i.e., $Y(x)=Y(Z,x)$.  
\endQuotable{}
\begin{ass}[Exclusion restriction] For all $z,x$,
\label{Ass:ExclRestriction}
$Y(z,x)= Y(x).$
\end{ass} 
\noindent The exclusion restriction is violated if the IV affects the outcome in alternative ways, in addition to its effect through the treatment. 

\beginQuotable[exclusionExample]
Panel B of \ref{fig:dags_intro} illustrates a potential violation of the exclusion restriction assumption. For example, \citet{angrist1996children} study the effect of the number of children ($X$) on female labor supply ($Y$), using the sex composition of the first two children as an IV ($Z$) (because same-sex children induce further births for parents who have a preference for gender variety). One potential concern is that same-sex sibship could reduce housing expenditures ($U_2$) due to hand-me-downs, which could then affect female labor supply decisions. In this case, an alternative path between the sex composition of the first two children and female labor supply would exist through the effect on household expenditures. This violates the exclusion restriction.  
\endQuotable{}

Together, outcome independence and exclusion restriction imply\footnote{\beginQuotable[YxNoExclusion]When the exclusion restriction does not hold, $Y(x)$ is still properly defined but does not equal $Y(z,x)$ for every value of $z$. Therefore,  recalling that $Y(x)=Y(Z,x)$, violation of the exclusion restriction implies $Z\cancel{\indep}Y(x)$.\endQuotable{}} 
\begin{equation}
\label{Eq:IVexogeneity}
  Z \indep Y(x) \text{ for all } x. 
\end{equation}
In loose terms, \eqref{Eq:IVexogeneity} requires that there are no alternative paths between the IV and the outcome except through the treatment. Because potential outcomes are never observed, neither these two assumptions nor \eqref{Eq:IVexogeneity} can be tested directly.

To identify a causal effect using an IV design, additional assumptions are also necessary. For example, the design of \citet{angrist1996identification} also requires treatment independence, relevance, and monotonicity. However, the negative control tests presented in this paper do not test these other assumptions.



\subsubsection{Alternative Path Variables}

To formalize the notion of a threat to the identification, we introduce the concept of an \emph{alternative path variable}---a variable that is part of a suspected alternative path between the IV and the outcome that, if such a path exists, would violate outcome independence or exclusion.
For simplicity, we assume that only one potential threat to the IV validity exists. \ref{AppSec:proofs} addresses a more general case with multiple threats. We distinguish between two types of alternative path variables that require different types of falsification tests.

\beginQuotable[APO]The first type of identification threat involves \textit{alternative path outcome} (APO) variables.
These variables are presumably associated with the outcome. The threat is that they are also associated with the IV, which would generate an alternative path. In \cite{martin2017bias}, conservativeness of the local population is an APO variable ($U_1$ in Panel A of \ref{fig:dags_intro}). Since conservativeness certainly affects voting behavior, it poses a threat if it also affects cable companies' decisions for channel position (represented by the dashed arrow).\endQuotable{} Panel B of \ref{fig:dags_intro} describes an APO variable for a potential violation of the exclusion restriction.

The formal definition of an APO variable is as follows.
\begin{dfn}[Alternative path outcome variable\label{dfn:apo_var1}]
A random variable $U$ is an APO variable if the following two conditions hold.
 \begin{enumerate}
\item  \textit{Latent IV validity}. $Z \indep Y(x) | U$.
\item \textit{Path indication}. If $Z \indep Y(x)$ then $Z \indep U$.
\end{enumerate}
\end{dfn}
\noindent{}  

Latent IV validity posits that had we observed and conditioned on the APO variable, the IV design would have been valid---both outcome independence and the exclusion restriction would hold conditional on the APO variable.
This condition implies that imperfect proxies for variables posing identification threats cannot themselves be APO variables, as controlling for an imperfect proxy does not make the IV and the potential outcome conditionally independent. Using again the example of \cite{martin2017bias}, the share of Republican votes in 1996 is only an imperfect proxy for the APO variable (latent conservativeness), and hence controlling for it does not eliminate the threat. Therefore, the Republican vote share in 1996 is not an APO variable as it does not satisfy the latent IV validity condition. 
\beginQuotable[LatentEx]Latent IV validity is analogous to the latent exchangeability assumption appearing in recent literature on negative controls in epidemiology \citep{shi2020selective} and statistics \citep{tchetgen2024introduction}.\endQuotable{}

Path indication states that a valid IV is not associated with the APO variable. Its contrapositive ensures that an association between the IV and the APO variable implies an alternative path between the IV and the potential outcome. 
Path indication guarantees that if there is a path from the IV to the APO variable, the path continues from the APO variable to the outcome. 
Therefore, it excludes variables unrelated to the outcome, as they can be associated with the IV without implying anything about the design validity. 
While APO variables often causally affect the outcome, it is not mandatory (as demonstrated in \ref{AppSubSec:NonCausalAPV}). Path indication also rules out variables that could be related to both the IV and the outcome without generating a correlation between them. For example, this would occur if a variable is correlated with the outcome for some subpopulation but potentially correlated with the IV only for a separate subpopulation. 
Two examples are provided in \ref{AppSubSec:ViolationPathHetero} and \ref{AppSubSec:ViolationPathMultiDim}. 

The second type of alternative path variables is \emph{alternative path instrument} (API) variables. 
API variables are known to be associated with the IV, and the concern is an association they might have with the outcome. This is in contrast to APO variables, which are known to be associated with the outcome, and the concern is their possible association with the IV. U sing the previous example of \citet{nunn2014us}, illustrated in Panel~C of \ref{fig:dags_intro}, 
weather conditions ($U_3$) is an API variable. Wheat production ($Z$) is known to be affected by weather. An alternative path that threatens identification may form if weather also affects conflicts ($Y$) directly (the dashed arrow exists).
Panel~D describes an API variable for a potential violation of the exclusion restriction.

Formally, an API variable satisfies the following definition. 
\begin{dfn}[Alternative path instrument variable\label{dfn:api_var1}]
A random variable $U$ is an API variable if the following two conditions hold.
\begin{enumerate}
\item  \textit{Latent IV validity}. $Z \indep Y(x) | U$.
\item \textit{Path indication}. If $Z \indep Y(x)$ then $U \indep Y|Z$.
\end{enumerate}
\end{dfn}
\noindent{}This definition resembles the definition of APO variables (Definition \ref{dfn:apo_var1}). The first condition, latent IV validity, is exactly as before. 
The difference between API and APO variables is encapsulated in the second condition, path indication. For API variables, this condition requires that if $Z \indep Y(x)$, then the API variable must be independent of the observed outcome conditional on the IV (i.e., $U \indep Y|Z$). Through its contrapositive, this condition implies that an association between the API variable and the outcome, not via the IV, indicates that there exists an alternative path between the IV and the outcome. Therefore, the IV is invalid. Typically, path indication is satisfied when the API variable is associated with the IV. 

For API variables, path indication rules out variables that are associated with the outcome through the treatment (conditional on the IV). 
Such variables are not informative about the validity of the IV design as they are associated with the outcome through the treatment, even if the IV design is valid. This is different from APO variables that could be associated with the treatment (even conditional on the IV). See \ref{AppSubSec:APvarAndTreatmnet} for an example and a further discussion of this issue. This implies that APO variables can be associated with (or even identical to) the original confounder of the treatment-outcome relationship (which we mark by $W$ in our examples). However, an API variable cannot. 
\endQuotable{}

\subsection{Negative Control Variables}
\label{SubSubSec:NCassumptions}

\beginQuotable[NCareProxy]
Negative control variables are observed proxies for the unobserved alternative path variables.
Building on the definitions of alternative path variables, we are now ready to formalize the assumptions required for a random variable to serve as a negative control. The first type of negative control, \emph{negative control outcome}, is a proxy for an APO variable. Observed variables can serve as NCOs if they satisfy the following definition.
\endQuotable{}

\beginQuotable[NCODefinition] 
\begin{dfn}
[Negative control outcome]
\label{dfn:NCO_var}
A random variable $NC$ is an NCO if there exists an APO variable $U$ such that the following two conditions hold.
\begin{enumerate}
    \item The NCO assumption.  $NC\indep Z|U$.
 \item $U$-comparability. $NC \cancel{\indep} U$.
\end{enumerate}
\end{dfn}
\endQuotable{} 

\beginQuotable[PathNotAssociation]
The NCO assumption guarantees that any path between the IV and the NCO must go through the APO variable $U$. It rules out variables that have other paths to the IV.\endQuotable{} 
\beginQuotable[NCOExample] Panel A of \ref{fig:dags_intro} demonstrates the NCO assumption in a setting with a potential violation of outcome independence. 
For example, consider again the effect of Fox News on voting \citep{martin2017bias}. The lagged outcome---Republican vote share in 1996---is an NCO ($NC_1$). The NCO assumption requires that any association between lagged voting and later channel position assignment ($Z$) arises only due to local conservativeness ($U_1$, the APO).\endQuotable{}
Panel~B of \ref{fig:dags_intro} depicts an example of an NCO ($NC_2$) that satisfies this assumption where a violation of the exclusion restriction is the concern. 

The NCO assumption is violated for variables that are directly related to the IV, not through an APO variable. This can occur if the IV affects the candidate for NCO, either directly or through the treatment or the outcome. 
For example, various IV studies on the impacts of exposure to air pollution on different outcomes use non-respiratory hospital admissions, a seemingly unrelated outcome, as NCOs. One might expect that these admissions would only correlate with flawed IVs for air pollution. However, \cite{guidetti2021placebo} demonstrate otherwise. They find that air pollution increases non-respiratory admissions through hospital congestion caused by a surge in respiratory admissions. Therefore, non-respiratory admissions are not informative about the IV validity as they correlate with both flawed and valid IVs. Formally, non-respiratory admissions correlate with the IV, not through any APO variable but due to the unrelated mechanism of congestion. Hence, non-respiratory admissions violate the NCO assumption. 
 
\beginQuotable[NCOUcomparability]
The $U$-comparability assumption guarantees that the NCO has a path to the APO variable. This assumption guarantees that the NCO is a relevant proxy for the APO variable.
For example, voting in 1996 satisfied $U$-comparability as it is correlated with the APO, unobserved conservativeness in the region. 
This assumption rules out variables that are uninformative about the design validity because they are unrelated to the identification threat (and specifically to the APO variable).
\endQuotable

\beginQuotable[NCIisProxy]
The second type of negative control variable, \emph{negative control instrument}, is a proxy for API variables. \endQuotable{}Observed variables can serve as NCIs if they satisfy the following definition. 
\begin{dfn}[Negative control instrument]
\label{dfn:NCI_var}
A random variable $NC$ is an  NCI if there exists an API variable $U$ such that the following two conditions hold.  \begin{enumerate}
    \item The NCI assumption. $NC \indep Y|Z,U$.
    \item $U$-comparability $NC \cancel{\indep} U|Z$.
\end{enumerate}
\end{dfn}
\noindent \beginQuotable[PathsNCIAssumption] The NCI assumption guarantees that any path between the outcome and the NCI goes through the API variable $U$ or the IV $Z$. It rules out variables that have other paths to the outcome. 
\endQuotable{}

While similar, the NCI assumption and the NCO assumption (Definition \ref{dfn:NCO_var}) differ in three key aspects. First, the alternative path variable $U$ is an API variable instead of an APO variable. Second, the conditional independence is between the NCI and the outcome instead of the IV. Due to these two differences, the NCI tests defined below test for a potential association with the outcome and not with the IV. The third difference is that the independence requirement is also conditional on the IV. This is because in valid IV designs, the NCI is often associated with the outcome through the IV, as we discuss in the next section. 

Panel C of \ref{fig:dags_intro} demonstrates the NCI assumption in a setting with a potential violation of outcome independence. For example, consider again the context of \citet{nunn2014us}, which uses an alternative crop (e.g., oranges) production as an NCI ($NC_3$). Orange production is affected by similar weather conditions ($U_3$) as wheat production ($Z$) and, therefore, would be correlated with it. However, unlike wheat, oranges are not used as food aid ($X$). Therefore, orange production is unrelated to conflicts ($Y$), conditional on both weather and wheat production.    

In many applications, the NCI assumption rules out a large class of observed variables because of their association with the outcome. For example, demographic variables often exhibit an association with the outcome, even conditional on the IV and the API variable, and therefore cannot serve as NCIs. Variables that are associated with the treatment are also not NCIs as they are also associated with the outcome conditional on the IV and the API variable. Moreover, in cases where the IV effect on the outcome is heterogeneous, any variable associated with the source of heterogeneity cannot serve as an NCI. 
In practice, the NCI assumption is more restrictive than the NCO assumption. The reason is that the NCO assumption requires conditional independence with the IV, which is typically more plausible than conditional independence with the outcome. 

\beginQuotable[NCIUcomparability]
The $U$-comparability assumption for NCIs implies that the NCI is indeed a proxy for the API variable. In contrast to $U$-comparability for NCOs, for NCIs, their association with the API variable must exist conditionally on the IV. This assumption rules out variables that are not informative about the IV validity because they are unrelated to the identification threat (and specifically to the API variable), conditional on the IV.
\endQuotable

\beginQuotable[AnalogytoTT]
The NCO and NCI definitions are analogous to the conditions that were formalized in previous literature on negative controls. In particular, U-comparability is common in the literature on negative controls  \citep{lipsitch2010negative,shi2020selective}. The NCO and NCI assumptions are similar to the conditional independence assumption of \citet{tchetgen2024introduction} (see equations 12 and 13).
\endQuotable{}

\beginQuotable[APVasNC]Definitions \ref{dfn:NCO_var} and \ref{dfn:NCI_var} imply that alternative path variables are themselves negative controls. This is because APO and API variables trivially satisfy both conditions in the definitions. For example, in the previously discussed design of \citet{angrist1996children}, the concern is that the sex composition of the first two children (the IV) may influence household expenditures (the APO variable) due to hand-me-downs, forming an alternative path to female labor supply (the outcome). \citet{rosenzweig2000natural} explore this by using a dataset in which clothing expenditures are observed, and use this APO variable as an NCO.\endQuotable{} 

The NCO and NCI assumptions can be weakened to cover more variables that are informative about the validity of the IV design. In \ref{AppSec:proofs}, we offer a more general definition of negative controls that allows for direct associations between the NCO and the IV or the NCI and the outcome, not through the alternative path variable if the design is invalid.

\subsection{Negative Control Tests}
\label{SubSubSec:NCOtest}

A \emph{negative control outcome test} (NCO test) is any statistical test of independence between the IV and an NCO. The null hypothesis is $H_0$: $Z \indep NC$. For example, \citet{martin2017bias} regress their NCO, Republican vote share in 1996, on the IV, Fox channel positioning. Under the null, the coefficient on the IV in this regression should equal zero. Indeed, they found no evidence to reject this hypothesis, which supports their design validity.

The following theorem states that rejecting the null hypothesis implies a violation of outcome independence or the exclusion restriction.

\begin{thm}
\label{Thm:NCOdep}
Assume that a random variable $NC$ is an NCO (Definition~\ref{dfn:NCO_var}). If $NC\cancel{\indep}Z$, then either outcome independence or exclusion restriction is violated. That is, the IV design is invalid.
\end{thm}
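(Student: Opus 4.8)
The plan is to prove the contrapositive: assuming the IV design is valid, I will show $NC \indep Z$, which directly contradicts the hypothesis $NC \cancel{\indep} Z$. Validity is the natural negation of the conclusion ``either outcome independence or exclusion restriction is violated,'' so this framing matches the statement exactly.

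First I would translate ``valid IV design'' into a usable statement about potential outcomes. By definition, validity means that both outcome independence (Assumption~\ref{Ass:Indep}) and the exclusion restriction (Assumption~\ref{Ass:ExclRestriction}) hold, and the excerpt records that these jointly imply $Z \indep Y(x)$ for all $x$, namely \eqref{Eq:IVexogeneity}. So under validity I may take $Z \indep Y(x)$ as given. Next I would invoke the APO variable $U$ attached to the NCO. By the path indication condition of Definition~\ref{dfn:apo_var1}, $Z \indep Y(x)$ implies $Z \indep U$. This is where the APO machinery does its essential work: it converts a statement about unobservable potential outcomes into one purely about the latent variable $U$, namely that the IV is marginally independent of the APO variable.

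The final, and only nontrivial, step combines $Z \indep U$ with the NCO assumption $NC \indep Z \mid U$ from Definition~\ref{dfn:NCO_var} to conclude $NC \indep Z$. This is the contraction property of conditional independence, but because the paper allows arbitrary discrete or continuous variables, I would verify it directly rather than cite a graphoid axiom, so as to avoid any hidden positivity assumption. Concretely, for measurable sets $A$ and $B$, the NCO assumption gives $P(NC \in A,\, Z \in B \mid U) = P(NC \in A \mid U)\,P(Z \in B \mid U)$ almost surely, while $Z \indep U$ gives $P(Z \in B \mid U) = P(Z \in B)$ almost surely. Substituting the second into the first and taking expectations over $U$, using that $P(Z \in B)$ is constant and that $\bbE[P(NC \in A \mid U)] = P(NC \in A)$, yields $P(NC \in A,\, Z \in B) = P(NC \in A)\,P(Z \in B)$, i.e.\ $NC \indep Z$. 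Since this contradicts $NC \cancel{\indep} Z$, the design cannot be valid, which is the claim.

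I expect the main obstacle to be nothing deeper than framing and justifying this last conditional-independence manipulation at the claimed level of generality (discrete or continuous variables); everything preceding it is definitional. I would also flag what is \emph{not} used: the $U$-comparability condition $NC \cancel{\indep} U$ plays no role here, since it governs the \emph{power} of the test (whether a genuine violation actually surfaces as $NC \cancel{\indep} Z$) rather than its soundness; likewise, the latent IV validity condition of Definition~\ref{dfn:apo_var1} is not invoked in this direction. Only path indication and the NCO assumption are needed.
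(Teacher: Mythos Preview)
Your proof is correct and matches the paper's own argument: the paper's proof sketch (and the formal appendix proof of the more general Theorem~\ref{Thm:NCOdepControls}) likewise combines the NCO assumption with path indication via the contraction property of conditional independence (stated there as Lemmas~\ref{Lem:indep1}--\ref{Lem:indep2}), just run in the direct direction rather than the contrapositive. Your observations that $U$-comparability and latent IV validity are unused in this direction are accurate and consistent with the paper's proof.
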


All proofs are given in \ref{AppSec:proofs}. The appendix proof covers a more general version of Theorem~\ref{Thm:NCOdep} for designs that include control variables (discussed in Section \ref{SubSubSec:FuncForm}). For the case without controls, the sketch of the proof is as follows. By the NCO assumption, the dependence between the IV and the NCO implies an association between the IV and an APO variable ($Z \cancel{\indep} U$). By path indication, $Z \cancel{\indep} U$ indicates an alternative path between the IV and the outcome ($Z\cancel{\indep}Y(x)$); i.e., the IV design is invalid. 

Similarly, a \emph{negative control instrument test} (NCI test) examines whether the outcome and the NCI are independent, conditional on the IV. Formally, the statistical test is for the null hypothesis $H_0:$ $NC \indep Y|Z$.
If the NCI is associated with the outcome conditional on the IV, this necessarily implies that the IV design is not valid, as stated in the following theorem.
\begin{thm}
\label{Thm:NCIdep}
Assume that a random variable $NC$ is an NCI (Definition \ref{dfn:NCI_var}). If $NC\cancel{\indep}Y|Z$, then either outcome independence or exclusion restriction is violated. That is, the IV design is invalid.
\end{thm}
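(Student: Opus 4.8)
The plan is to prove the contrapositive: assuming the IV design is valid---that is, $Z \indep Y(x)$ for all $x$, which is exactly condition \eqref{Eq:IVexogeneity}---I would show that the NCI must be independent of the outcome given the IV, $NC \indep Y|Z$. Contraposing this implication yields the theorem, because the paper has already established that outcome independence (Assumption~\ref{Ass:Indep}) together with the exclusion restriction (Assumption~\ref{Ass:ExclRestriction}) imply \eqref{Eq:IVexogeneity}; hence a failure of \eqref{Eq:IVexogeneity} forces at least one of the two assumptions to be violated, which is the conclusion ``the IV design is invalid.''

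Working from validity, I would assemble two conditional independence statements and combine them. First, path indication for the API variable (the second condition of Definition~\ref{dfn:api_var1}) states that $Z \indep Y(x)$ implies $U \indep Y|Z$; so the assumed validity delivers $Y \indep U|Z$. Second, the NCI assumption (the first condition of Definition~\ref{dfn:NCI_var}) gives $NC \indep Y|Z,U$, equivalently $Y \indep NC|U,Z$. The key step is then the contraction property of conditional independence: from $Y \indep U|Z$ and $Y \indep NC|U,Z$ I would conclude $Y \indep (U,NC)|Z$, and by decomposition $Y \indep NC|Z$, i.e., $NC \indep Y|Z$. This is the mirror image of the NCO argument sketched after Theorem~\ref{Thm:NCOdep}, where contraction is applied with the IV $Z$ in the role that the outcome $Y$ plays here.

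I expect the only genuine subtlety to be the invocation of the conditional independence calculus, so I would state explicitly that contraction and decomposition are semi-graphoid axioms holding for \emph{any} probabilistic conditional independence relation (no positivity or density assumption is required), which keeps the argument valid for the discrete or continuous variables allowed in the setup. A secondary bookkeeping point is the quantifier and the potential-outcome objects: path indication already converts the potential-outcome statement $Z \indep Y(x)$ (for all $x$) into a statement about the observed $Y$, so I never need to manipulate potential outcomes directly once validity is assumed. Note also that $U$-comparability ($NC \cancel{\indep} U|Z$) plays no role in this direction---it guarantees the test's relevance rather than the logical implication---so it can be set aside here, exactly as in the NCO case. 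For the version with control variables covered in \ref{AppSec:proofs}, I would run the identical argument with every independence statement additionally conditioned on the controls.
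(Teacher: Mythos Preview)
Your proposal is correct and essentially coincides with the paper's argument. The paper proves the general version (Theorem~\ref{Thm:NCIdepControls}) directly: from $NC\cancel{\indep}Y|Z$ and the NCI assumption $NC\indep Y|Z,U$, Lemma~\ref{Lem:indep2} yields $U\cancel{\indep}Y|Z$, and the contrapositive of path indication then gives $Z\cancel{\indep}Y(x)$. Your contrapositive presentation---validity $\Rightarrow$ $U\indep Y|Z$ by path indication, then contraction (the paper's Lemma~\ref{Lem:indep1}) with $NC\indep Y|Z,U$ yields $NC\indep Y|Z$---is the same reasoning run in reverse, since Lemma~\ref{Lem:indep2} is precisely the contrapositive of Lemma~\ref{Lem:indep1}. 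Your observations that $U$-comparability is unused and that the argument extends verbatim with controls are likewise consistent with the paper.
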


For example, \citet{nunn2014us} regress their outcome, conflicts, on various alternative crop production (e.g., oranges), which are the NCIs, controlling for the original IV, wheat production. They are unable to reject a zero coefficient on alternative crops. Hence, they do not find an indication of a problem with the IV. 

NCI tests typically require conditioning on the IV, as the NCI may be associated with the outcome even in valid IV designs. This association arises because the NCI is often associated with the IV, which in turn influences the outcome through the treatment. For example, in Panels C and D of \ref{fig:dags_intro}, the NCI and the outcome are associated through the IV, even if no alternative path exists and the IV design is valid. In \cite{nunn2014us}, orange production (the NCI) is associated with conflicts (the outcome), as both are associated with wheat production (the IV).

However, if the NCI and IV are independent, conditioning on the IV is not required. In such cases, researchers can use an unconditional independence test for the null $H_0:$ $NC \indep Y$, as formalized in the following theorem.

\begin{thm}
\label{thm:NCIdepUnconditional}
Assume that a random variable $NC$ satisfies the NCI assumption. If in addition $NC \indep Z$, then if  $NC\cancel{\indep}Y$, either outcome independence or exclusion restriction is violated. That is, the IV design is invalid.
\end{thm}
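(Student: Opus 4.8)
The plan is to prove the contrapositive of the stated implication: I will show that if the IV design is valid, then $NC \indep Y$, so that the observed dependence $NC\cancel{\indep}Y$ can only arise when the design is invalid. So suppose the design is valid, i.e., both outcome independence (Assumption~\ref{Ass:Indep}) and the exclusion restriction (Assumption~\ref{Ass:ExclRestriction}) hold. By \eqref{Eq:IVexogeneity} this gives $Z \indep Y(x)$ for all $x$. Let $U$ be the API variable guaranteed by the NCI assumption. Because $Z \indep Y(x)$ holds, the path-indication condition of Definition~\ref{dfn:api_var1} applies and yields $U \indep Y\,|\,Z$.

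The next step combines this with the NCI assumption $NC \indep Y\,|\,Z,U$ to conclude $NC \indep Y\,|\,Z$. This is exactly the conditional-independence conclusion that underlies Theorem~\ref{Thm:NCIdep}; I would emphasize that obtaining it uses only the NCI assumption together with path indication, and not $U$-comparability, which is why the present theorem can dispense with the latter and assume only the NCI assumption. Concretely, writing the two statements as $Y \indep NC\,|\,\{Z,U\}$ and $Y \indep U\,|\,Z$ and applying the contraction axiom gives $Y \indep \{U,NC\}\,|\,Z$, after which decomposition yields $NC \indep Y\,|\,Z$. (Equivalently, within each stratum of $Z$ one integrates the identity $p(y\mid nc,z,u)=p(y\mid z,u)=p(y\mid z)$ against the conditional law of $U$ given $NC,Z$.)

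Finally, I would upgrade this conditional statement to an unconditional one using the extra hypothesis $NC \indep Z$. Applying contraction to $NC \indep Z$ and $NC \indep Y\,|\,Z$ gives $NC \indep \{Z,Y\}$, and decomposition then delivers $NC \indep Y$. This establishes the contrapositive, so $NC\cancel{\indep}Y$ forces a violation of outcome independence or the exclusion restriction, and the theorem follows.

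The only delicate part is the manipulation of conditional independences in the last two steps, since such manipulations are not valid for arbitrary conditional-independence statements. The argument is safe because it relies solely on the contraction and decomposition axioms, both of which hold for every probability distribution; in particular it never invokes the intersection axiom, whose use would require a positivity assumption on the joint density. I therefore expect no obstacle beyond stating these two combination steps carefully---and, if the graphoid route is deemed less transparent than a direct computation, verifying the two elementary integral identities indicated above.
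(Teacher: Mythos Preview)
Your proof is correct and follows essentially the same route as the paper. The paper argues by contrapositive, invokes Theorem~\ref{Thm:NCIdep} to obtain $NC\indep Y\mid Z$ under validity, and then applies Lemma~\ref{Lem:indep1} (the contraction axiom) together with $NC\indep Z$ to conclude $NC\indep Y$; you do the same, except that you unpack the contrapositive of Theorem~\ref{Thm:NCIdep} explicitly via path indication and the NCI assumption, and you phrase the two combination steps using the graphoid axioms (contraction plus decomposition) rather than citing Lemma~\ref{Lem:indep1} directly. Your observation that only the NCI assumption---not $U$-comparability---is needed is also accurate and consistent with how the paper's proof of Theorem~\ref{Thm:NCIdep} actually proceeds.
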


\beginQuotable[T3Remark]
\noindent \ref{AppSec:proofs} provides a version of this theorem with control variables, in which the IV and NCI need to be conditionally independent only given the set of controls. 
\endQuotable{}

\beginQuotable[T3Example]Situations where $NC\indep Z$ (so, per the theorem, unconditional NCI tests may be valid) can occur when considering violations of the exclusion restriction assumption. Panel~A of \ref{fig:scenarios} provides an example. Consider the context of \citet{jacob2007crime}, who study the effect of lagged crime ($X$) on current crime ($Y$). They use lagged weather as an IV ($Z$) for lagged crime. The API variable is temporal displacement of economic activity ($U$): Lagged weather can postpone economic activity to the current period, which could in turn affect current crime ($Y$), thus violating the exclusion restriction. In this context, a different variable that displaces economic activity can be used as an NCI. For example, payday cycles ($NC$) are known to impact the timing of economic activity \citep{Hastings2010}. Payday timing is independent of weather. Therefore, an association between payday and crime would imply a violation of the exclusion restriction assumption. In this case, no conditioning on $Z$ is needed.\footnote{The NCI assumption is that payday timing only correlates with crime only through the timing of economic activity.}\endQuotable{}
By contrast, in contexts where a violation of outcome independence is suspected, the IV and the NCI are typically associated as well (as in Panel C of \ref{fig:dags_intro}). Therefore, the NCI test should condition on the IV.

As a result, unconditional independence tests between a negative control and the outcome are unique to IV settings. In non-IV settings, there is no exclusion restriction, and therefore, independence tests between a negative control and the outcome, carried out to detect unmeasured confounding, are always done conditionally.\footnote{The analog of NCI in non-IV settings is negative control exposure (NCE). NCE tests always condition on the exposure.}

\beginQuotable[AlwaysControl]
Nevertheless, researchers can choose to always control for the IV. Since both $NC$ and $Z$ are observed, the condition $NC \indep Z$ can be empirically tested. However, researchers might opt to skip this test and condition on the IV anyway. In a linear model, adding an additional control that is uncorrelated with $NC$ will not affect the coefficient estimate for $NC$ asymptotically. Furthermore, if $Z$ has a causal effect on $Y$,  including it in the regression can improve the precision of the estimation.\endQuotable{}

\subsection{Control Variables and Functional Forms}
 \label{SubSubSec:FuncForm}

\beginQuotable[controls]
In many cases, the IV is believed to be valid only conditionally on certain control variables. For example, in papers that use judge assignment as an IV, the assignment of judges is quasi-random only within date and location \citep[e.g.,][]{kling2006incarceration}. Therefore, the independence assumption is satisfied only conditionally, and the IV design is valid only once controlling for date and location. 
\endQuotable{}

Formally, let $C$ be the vector of controls. Similar to the case without controls, outcome independence and exclusion restriction together imply $Z \indep Y(x)|\ C$. \ref{AppSec:proofs} presents the theory of negative controls when control variables are included.

When the IV is presumably valid only conditional on a vector of control variables $C$, an NCO test is a test for the null hypothesis 
\begin{equation}
\label{Eq:NCOnullWithControls}
H_0: NC \indep Z | C.
\end{equation}
Similarly, for NCIs, the null hypothesis is 
\begin{equation}
\label{Eq:NCEullWithControls}
H_0: NC \indep Y | C,Z.
\end{equation}
While accounting for controls in an IV analysis can be done in a variety of ways \cite[e.g.,][]{abadie2003semiparametric}, the large majority of applications use a two-stage least squares (2SLS) specification. This specification makes additional functional form assumptions. Most negative control tests used in practice adopt the same functional form as the 2SLS. 

In particular, NCO tests typically adopt the functional form for how the IV depends on the control variables. 
To avoid excessive notation, let $C$ also denote the set of controls in a 2SLS specification.\footnote{The vector $C$ may include, for example, a quadratic function of one of the original controls or interactions. For ease of notation, $C$ would always include the intercept.}
\beginQuotable[WeaklyCausal]
\cite{blandhol2022tsls} show that 2SLS requires the following linearity assumption to satisfy their definition of a \emph{weakly causal estimand}.\footnote{
A weakly causal estimand is a positively weighted average of subgroup-specific treatment effects.}
\endQuotable{}
\begin{ass}
[Rich covariates]\label{ass:RC}
The conditional expectation of the IV is linear in the control specification. Namely,
  $\bbE[Z|C] =  \gamma'_CC ,$
for some vector $\gamma_C$.
\end{ass}
Combining the null hypothesis of NCO tests \eqref{Eq:NCOnullWithControls} and rich covariates, we expect that
\begin{equation}
\label{Eq:NCnull2SLS}
\bbE[Z|C,NC]= \gamma_C'C.
\end{equation}
This equation provides a more specific null hypothesis for conditional independence testing. This hypothesis can be tested by regressing the IV on the vector of controls and the NCO. The following corollary formalizes this argument. 

\begin{corollary}
\label{Coro:NCO_RC}
Assume that the random variable $NC$ is an NCO. Let
$$\gamma =(\gamma'_C,\gamma_{NC}) = \argmin_{b_c,b_{NC}}\bbE[Z-b'_CC - b_{NC}NC]^2$$
be the population-level OLS coefficient of regressing $Z$ on $C,NC$.
If $\gamma_{NC}\neq 0$, then either outcome independence, exclusion restriction, or rich covariates is violated.
\end{corollary}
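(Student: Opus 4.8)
The plan is to prove the contrapositive: assuming that outcome independence, the exclusion restriction, and rich covariates (Assumption~\ref{ass:RC}) all hold, I show that the population OLS coefficient $\gamma_{NC}$ equals zero. This immediately gives that $\gamma_{NC}\neq 0$ forces at least one of the three to fail.

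First I would obtain a conditional-independence statement. Outcome independence together with the exclusion restriction gives the conditional exogeneity $Z \indep Y(x)\mid C$ for all $x$, the with-controls analogue of \eqref{Eq:IVexogeneity}. By path indication for the APO variable $U$ underlying $NC$, this yields $Z \indep U\mid C$, and combining it with the NCO assumption $NC \indep Z\mid U,C$ via the contraction and decomposition properties of conditional independence (take $Z \indep U\mid C$ and $Z \indep NC\mid U,C$ to get $Z \indep \{U,NC\}\mid C$, then drop $U$) gives $NC \indep Z\mid C$. Since this is exactly the conclusion of the with-controls version of Theorem~\ref{Thm:NCOdep}, I would simply invoke that theorem rather than rederive the manipulation.

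Next I would translate this into a statement about conditional means. The conditional independence $NC \indep Z\mid C$ implies $\bbE[Z\mid C,NC]=\bbE[Z\mid C]$. Invoking rich covariates, $\bbE[Z\mid C]=\gamma_C'C$, so that $\bbE[Z\mid C,NC]=\gamma_C'C$, which is precisely \eqref{Eq:NCnull2SLS}. The key observation is that the conditional expectation of $Z$ given the full regressor vector $(C,NC)$ is an exactly linear function of those regressors, carrying a zero coefficient on $NC$.

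The final and most delicate step is the bridge from this conditional-mean statement to the linear-projection coefficient $\gamma_{NC}$. Here I would use the standard fact that when the conditional expectation function is linear in the regressors, the population best linear predictor coincides with it: the residual $Z-\gamma_C'C$ is mean-independent of $(C,NC)$ and hence orthogonal to them, so $(\gamma_C',0)$ satisfies the normal equations defining the $\argmin$, and (assuming no perfect collinearity, so the minimizer is unique) $\gamma_{NC}=0$. I expect this step to be the main obstacle, and it is exactly where rich covariates earns its role: without linearity of $\bbE[Z\mid C]$, the coefficient $\gamma_{NC}$ could be nonzero even when $NC \indep Z\mid C$, because a nonlinear $\bbE[Z\mid C]$ would be better approximated by a linear form that loads on $NC$ whenever $NC$ is correlated with $C$. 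Making this explicit is what justifies listing rich covariates among the three assumptions that a rejection can reflect.
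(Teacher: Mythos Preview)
Your proposal is correct and follows essentially the same route as the paper: both argue by contrapositive, first obtaining $\bbE[Z\mid C,NC]=\gamma_C'C$ from Theorem~\ref{Thm:NCOdep} (with controls) plus rich covariates, and then deducing $\gamma_{NC}=0$. The only cosmetic difference is that the paper executes the last step via the Pythagorean decomposition of the mean-squared error, whereas you phrase it as ``linear CEF equals BLP'' via the normal equations; these are the same standard fact.
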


In many cases, researchers run the reverse regression in which the NCO is the outcome variable. This practice is equivalent, as formalized in the following corollary. 

 \begin{corollary}
 \label{Coro:pseudo-outcome}
 Assume that $NC$ is an NCO. Let $$\beta =(\beta_Z,\beta'_C) = \argmin_{b_Z,b_C}\bbE[NC-b_ZZ - b'_CC]^2$$ be the population-level OLS coefficient of regressing $NC$ on $Z,C$. If $\beta_Z \neq 0$, then either outcome independence,  exclusion restriction, or rich covariates is violated. 
 \end{corollary}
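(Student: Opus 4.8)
The plan is to prove Corollary~\ref{Coro:pseudo-outcome} by reducing it to Corollary~\ref{Coro:NCO_RC}, which has already established that a nonzero coefficient on $NC$ in the \emph{forward} regression of $Z$ on $(C,NC)$ implies a violation of outcome independence, the exclusion restriction, or rich covariates. Since both statements concern the same population OLS projection onto the span of $\{C,NC,Z\}$, it suffices to show that the reverse-regression coefficient $\beta_Z$ vanishes if and only if the forward-regression coefficient $\gamma_{NC}$ vanishes. The conclusion of Corollary~\ref{Coro:NCO_RC} would then transfer verbatim, since $\beta_Z \neq 0$ would be equivalent to $\gamma_{NC}\neq 0$.

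First I would apply the Frisch--Waugh--Lovell theorem to both regressions, partialling out the controls $C$. Writing $\tilde Z$ and $\widetilde{NC}$ for the population residuals of $Z$ and $NC$ after projecting each on $C$, FWL yields
\begin{equation}
\gamma_{NC} = \frac{\mathrm{Cov}(\tilde Z,\widetilde{NC})}{\mathrm{Var}(\widetilde{NC})}, \qquad \beta_Z = \frac{\mathrm{Cov}(\tilde Z,\widetilde{NC})}{\mathrm{Var}(\tilde Z)}.
\end{equation}
Both coefficients share the numerator $\mathrm{Cov}(\tilde Z,\widetilde{NC})$, the partial covariance of $Z$ and $NC$ given $C$, so they satisfy $\beta_Z\,\mathrm{Var}(\tilde Z)=\gamma_{NC}\,\mathrm{Var}(\widetilde{NC})$. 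Hence each coefficient is zero exactly when this partial covariance is zero, giving the desired equivalence $\beta_Z\neq 0 \iff \gamma_{NC}\neq 0$.

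The one place requiring care---and the main obstacle---is the nondegeneracy needed for the two residual variances to be strictly positive and for both regressions to be well defined. If $\mathrm{Var}(\tilde Z)=0$, then $Z$ is a linear function of $C$, so the IV has no variation beyond the controls; this is ruled out by instrument relevance. If $\mathrm{Var}(\widetilde{NC})=0$, then $NC$ is collinear with $C$ and $\beta_Z$ itself is not identified; this is the standard full-rank condition implicit in defining the OLS coefficient in the corollary's statement. Under these mild regularity conditions the equivalence holds, and invoking Corollary~\ref{Coro:NCO_RC} completes the argument. I would not expect any additional structure from the NCO definition or Assumption~\ref{ass:RC} to be needed, since the entire content of the reverse-regression version is the algebraic symmetry of partial OLS coefficients; all the substantive causal work has already been carried out in Corollary~\ref{Coro:NCO_RC}.
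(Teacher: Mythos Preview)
Your proposal is correct and follows essentially the same approach as the paper: apply Frisch--Waugh--Lovell to both regressions, observe that $\beta_Z$ and $\gamma_{NC}$ share the common numerator $\mathrm{Cov}(\tilde Z,\widetilde{NC})$, conclude that $\beta_Z\neq 0$ implies $\gamma_{NC}\neq 0$, and invoke Corollary~\ref{Coro:NCO_RC}. Your additional discussion of the nondegeneracy conditions on $\mathrm{Var}(\tilde Z)$ and $\mathrm{Var}(\widetilde{NC})$ is a helpful clarification that the paper leaves implicit.
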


NCI tests typically adopt the functional form of the relationship between the outcome and the IV and the control variables.
Specifically, NCI tests often use the same structure as the reduced form equation. Therefore, they implicitly make the following assumption.
\begin{ass}[Correctly Specified Reduced Form (CSRF)]
\label{ass:linear_RF}
The conditional expectation of the outcome is linear in the IV and the control variables. Namely,  $\bbE[Y|Z,C] = \theta_Z Z +  \theta'_CC$, for some $\theta_Z$ and vector $\theta_C$.
\end{ass}
Combining the null hypothesis \eqref{Eq:NCEullWithControls} with the CSRF assumption, we expect that 
\begin{equation}
\label{Eq:NCInullCSRF}
\bbE[Y|Z,C,NC]= \theta_Z Z +  \theta'_CC.
\end{equation}
This equation also provides a more specific null hypothesis, which can be tested with OLS. The following corollary shows that such an OLS jointly tests IV violations due to an alternative path and CSRF.

\begin{corollary}
\label{Coro:pseudo-IV}
Assume that the random variable $NC$ is an NCI. Let $$\theta = (\theta_Z,\theta'_C,\theta_{NC}) = \argmin_{b_Z,b_c,b_{NC}}\bbE[Y-b_ZZ-b'_CC - b_{NC}NC]^2$$
be the population-level OLS coefficient of regressing $Y$ on $Z,C,NC$.
If $\theta_{NC} \neq 0$ then either outcome independence,  exclusion restriction, or CSRF is violated. 
\end{corollary}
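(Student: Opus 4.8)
The plan is to prove the contrapositive: assuming that outcome independence, the exclusion restriction, and CSRF (Assumption~\ref{ass:linear_RF}) all hold, I would show that the population OLS coefficient $\theta_{NC}$ must equal zero. The argument proceeds by first deriving a conditional-independence statement, then converting it into a statement about the conditional expectation of $Y$, and finally reading off the OLS coefficient via the best-linear-predictor characterization.

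First I would obtain $NC \indep Y \mid Z, C$. Since outcome independence and the exclusion restriction jointly give $Z \indep Y(x) \mid C$, the controls-augmented version of Theorem~\ref{Thm:NCIdep} (established in \ref{AppSec:proofs}) applies: its contrapositive states that a valid IV design forces the NCI to be independent of the outcome conditional on the IV and the controls, i.e., $NC \indep Y \mid Z, C$. This is precisely the null hypothesis \eqref{Eq:NCEullWithControls}. Translating independence into conditional means, $NC \indep Y \mid Z, C$ yields $\bbE[Y \mid Z, C, NC] = \bbE[Y \mid Z, C]$, and invoking CSRF for the right-hand side gives $\bbE[Y \mid Z, C, NC] = \theta_Z Z + \theta'_C C$, matching \eqref{Eq:NCInullCSRF}. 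The key feature to carry forward is that this conditional expectation is linear in the regressors $(Z, C, NC)$ and contains no $NC$ term.

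The final step is to conclude $\theta_{NC} = 0$, and this is the part that needs the most care. I would use the fact that the population OLS fit is the best linear predictor of $Y$, i.e., the $L^2$-projection of $Y$ onto the linear span of $Z$, $C$, and $NC$, whereas $\bbE[Y \mid Z, C, NC]$ is the projection onto \emph{all} square-integrable functions of $(Z, C, NC)$. Because the residual $Y - \bbE[Y \mid Z, C, NC]$ is orthogonal to every function of $(Z, C, NC)$---in particular to each of $Z$, $C$, and $NC$---no linear combination placing a nonzero loading on $NC$ can reduce the mean squared error further. Since I have already shown that $\bbE[Y \mid Z, C, NC] = \theta_Z Z + \theta'_C C$ itself lies in the linear span (with zero loading on $NC$), it must coincide with the best linear predictor; by uniqueness of the OLS minimizer (under no perfect multicollinearity among $Z$, $C$, $NC$), the minimizing coefficients are $(\theta_Z, \theta'_C, 0)$, so $\theta_{NC} = 0$. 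The main obstacle is thus purely this projection argument---ensuring that a linear CEF with no $NC$ term genuinely pins the OLS loading on $NC$ to zero---while the earlier steps are routine applications of the controls-version of Theorem~\ref{Thm:NCIdep} and the definition of conditional independence.
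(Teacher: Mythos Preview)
Your proposal is correct and follows essentially the same route as the paper: both argue the contrapositive, invoke the controls version of Theorem~\ref{Thm:NCIdep} together with CSRF to obtain $\bbE[Y\mid Z,C,NC]=\theta_Z Z+\theta'_C C$ (i.e., \eqref{Eq:NCInullCSRF}), and then conclude $\theta_{NC}=0$ because a linear CEF coincides with the population OLS fit. The only cosmetic difference is that the paper writes the final step as the explicit MSE decomposition $\bbE[Y-b_ZZ-b'_CC-b_{NC}NC]^2=\bbE[Y-\bbE[Y\mid Z,C,NC]]^2+\bbE[\bbE[Y\mid Z,C,NC]-b_ZZ-b'_CC-b_{NC}NC]^2$, whereas you phrase it as an $L^2$-projection/orthogonality argument; these are the same computation.
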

Corollaries \ref{Coro:NCO_RC}, \ref{Coro:pseudo-outcome}, and \ref{Coro:pseudo-IV} imply that, in the tests discussed, the null hypothesis can be rejected in IV designs that satisfy outcome independence and exclusion if functional form assumptions are violated.
For NCO tests, the null can be rejected because the rich covariates assumption is not satisfied. In such cases, researchers can still estimate a causal effect by modifying the functional form or using methods other than 2SLS \citep{blandhol2022tsls}.
For NCI tests, the null can be rejected because the CSRF assumption is violated. However, unlike rich covariates, CSRF is not a necessary assumption for 2SLS analysis, implying that negative control tests sensitive to this assumption could reject perfectly valid IV designs. 

For example, an NCI test can reject the null in designs where the IV is randomly assigned due to CSRF violation. Random assignment guarantees that outcome independence and rich covariates hold. Assuming the exclusion restriction holds, the design is valid. However, CSRF could still be violated if the IV has a nonlinear effect on the outcome or a heterogeneous effect across control vector values. In such cases, an NCI test may reject the null in \eqref{Eq:NCInullCSRF}, despite the design being valid.

\section{Implementation Guidance} \label{sec:practice}
This section offers guidelines for the implementation of negative control tests. We recommend that researchers follow four steps, summarized in \ref{fig:steps}. 
First, when possible, researchers should articulate specific threats to the validity of the IV design and characterize the alternative paths variables, as discussed in Section \ref{subsec:APV_guide}.
Second, researchers should survey available data to identify suitable negative controls---variables that can serve as proxies for the unobserved alternative path variables. 
These proxies should satisfy the NCO or NCI assumption (see Definitions \ref{dfn:NCO_var} and \ref{dfn:NCI_var}). Examples are discussed in Section~\ref{SubSec:chooneNC}. 

Third, researchers should choose a statistical test for independence between the NCO and the IV or between the NCI and the outcome, conditioning on the IV. \label{pt:controls}For IV designs that require conditioning on a set of controls, negative control tests should also condition on these controls. Section \ref{SubSec:choosingTest} discusses particular test specifications, their validity, and the assumptions they test, which in some cases also include functional form assumptions.
Fourth and finally, researchers should interpret the result and conduct further diagnostics if the test rejects the null, as discussed in Section \ref{SubSec:aftermath}.

To illustrate these recommendations, we apply them to IV designs used in prior work. We chose four widely cited papers published in the \textit{American Economic Review} with publicly posted replication data. We use \cite{autor2013china} and \cite{deming2014using} to discuss NCO tests and \cite{ashraf2013out} and \cite{nunn2014us} to discuss NCI tests.\footnote{\cite{ashraf2013out} and \cite{nunn2014us} are the two most cited AER papers published since 2013 that use an NCI test. Similarly, \cite{autor2013china} is the most cited AER paper published after 2013 that uses an NCO test. \cite{deming2014using} was selected to demonstrate how our proposed follow-up analysis can be used to diagnose and correct problems with the IV design in Section \ref{SubSec:aftermath}.} \ref{tab_applications_intro} summarizes the IV designs in these papers and the negative controls they used in their falsification tests. \ref{appendix_practice} provides additional details on our analyses.

\subsection{Potential Threats}\label{subsec:APV_guide}

Two guiding questions can help researchers characterize potential violations of IV validity. This characterization can assist in selecting appropriate negative control variables and determining which hypothesis should be tested. The first question is whether the primary concern is a violation of outcome independence (Assumption \ref{Ass:Indep}) or the exclusion restriction (Assumption \ref{Ass:ExclRestriction}). Both types of violations introduce an alternative path between the IV and the outcome. 

Outcome independence is violated when this path is through some factor that affects both the IV and the outcome. As previously discussed, \citet{martin2017bias} examine whether Fox News viewership influences Republican vote shares using cable channel positions as an IV. The concern is that cable companies may place Fox News in lower channel numbers in conservative locations, where voters lean republican regardless. Violations of outcome independence are illustrated in panels A and C of \ref{fig:dags_intro}. 

The exclusion restriction is violated when the IV affects the outcome through channels other than its effect through the treatment. For example, as previously discussed, in \citet{angrist1996children}, the sex composition of the first two children (the IV) may affect female labor force participation (the outcome) not only through its effect on family size (the treatment) but also through its effects on household expenditures due to hand-me-downs. Exclusion restriction violations can occur even with randomly assigned IVs, as in a randomized controlled trial. Violations of the exclusion restriction are illustrated in panels B and D of \ref{fig:dags_intro}.

The second question is whether the threat (the alternative path) operates through an APO or an API variable. That is, does the alternative path variable have a known association with the outcome, and the concern is that it may also be associated with the IV? Or does it have a known association with the IV, and the concern is that it may also be related to the outcome? In the first case, the alternative path operates through an APO variable; in the second, it operates through an API variable. In the context of \citet{martin2017bias}, unobserved conservativeness is an APO variable---it certainly influences voting for Republican candidates (the outcome), yet it is unclear whether it is also associated with Fox News channel placement (the IV). By contrast, in \citet{nunn2014us}, weather conditions are an API variable---they surely affect wheat production (the IV), and the concern is that they may also directly affect the conflicts in aid recipient countries (the outcome). Note that both outcome independence and exclusion restriction can be violated through either APO or API variables. In \ref{fig:dags_intro}, panels A and B demonstrate this for APO variables and panels C and D for API variables. 

These two questions can assist researchers in finding relevant negative control variables and choosing the right negative control tests. For APO variables, researchers should search for NCOs and test their association with the IV. For API variables, researchers should search for NCIs and test their association with the outcome, conditionally on the IV. The type of violation (outcome independence or exclusion restriction) can be useful for thinking of relevant negative control types.

\subsection{Choosing Negative Controls} \label{SubSec:chooneNC}

In this section, we discuss different types of negative controls, both commonly used in practice and novel ones suggested by the theoretical framework.

\subsubsection{Common Types of Negative Control Outcomes}

\beginQuotable[Predetermined]
\textbf{Predetermined Variables.} 
Variables fixed before the IV is determined are frequently used in NCO tests. Common examples include lagged outcome variables and demographic characteristics such as gender, race, and age.
Predetermined variables are useful for testing outcome independence. 
In some cases, researchers may choose to use predetermined variables as NCOs even without clear knowledge of which exact APO variable they proxy for. If the IV is associated with a predetermined variable, it could imply that it is affected by something that affects the outcome as well. 
\endQuotable{}

However, not every predetermined variable is a valid NCO. First, NCOs need to satisfy U-comparability (Definition~\ref{dfn:NCO_var})---predetermined variables that are completely unrelated to the outcome are uninformative and should not be used. Second, not all predetermined variables satisfy the NCO assumption. In particular, certain predetermined variables may influence the IV, even if the underlying IV design is valid. For example, when the IV is the child's quarter of birth \citep[as in][]{angrist1991does}, the parents' quarter of marriage is not a valid NCO as it likely influences the child's quarter of birth, even if the design is valid.

\beginQuotable[DemingPredetermined]
When IVs are assumed to be quasi-randomly assigned (e.g., lotteries), they cannot be affected by predetermined variables.
Researchers could therefore use predetermined variables as NCOs to evaluate the claim that the IV is quasi-random.\footnote{In this case, predetermined variables are NCOs based on the more general Definition \ref{dfn:NCO_var_controls}. This definition allows for the NCO to be directly associated with the IV if the IV is not quasi-random as claimed.}
If the IV is associated with a predetermined variable, it is unlikely to be quasi-random, and hence, outcome independence might not hold.
For example, we found multiple predetermined variables in the replication data from \citet{deming2014using}, which uses an IV constructed based on school lotteries. We use these predetermined variables as NCOs to evaluate outcome independence. We have found an association of the IV with the predetermined variables. In particular, we found that the construction of the IV involved non-random components that require additional controls; see Section~\ref{SubSec:aftermath}.
\endQuotable{}

\beginQuotable[AutorPredetermined]
Predetermined variables are also useful NCOs when the IV is not quasi-random.  
For example, \citet{autor2013china} use a shift-share IV for commute-zone exposure to Chinese imports to evaluate their impact on employment. To evaluate this IV, they use predetermined local labor market manufacturing employment as NCOs. The concern is that since industry exposure to Chinese imports is non-random, it might be associated with other labor market conditions, which in turn could be associated with the outcome (e.g., Chinese imports are more pronounced in regions with industries that were declining in Western countries regardless). Such an association would violate outcome independence. 
We found many additional predetermined variables in the original paper's replication data that could proxy for latent local labor market conditions (e.g., past unemployment). These variables can also serve as NCOs. The NCO assumption requires that any association of the IV with the predetermined variables used as NCOs is driven by an APO variable, i.e., by something that also affects the outcome. This would be violated if, for example, Chinese import penetrated industries due to economic factors that were only relevant in the past and are no longer relevant in the studied period. 
\endQuotable{}

\textbf{IV Leads and Lags.} 
Certain IVs are predicated on serendipitous or chance occurrences (``strokes of luck''). Because such unexpected shocks should not be autocorrelated, leads and lags of the variable used as the IV can serve as NCOs.  For example, \cite{jager2022substitutable} use a worker's premature death as an IV for employee turnover, under the assumption that such deaths occur randomly across firms. A potential concern is that deaths are non-random and reflect riskier conditions in the firm (the APO variable) that directly impact wages (the outcome). This would violate outcome independence. To rule this out, \citeauthor{jager2022substitutable} use subsequent premature deaths in the same firm as an NCO that proxies for potentially unobservable riskier conditions. The NCO assumption here stipulates that given the risk conditions, premature deaths should not be autocorrelated. A recurring pattern of premature deaths would cast doubt on the assumption that such deaths occur randomly across firms.

\textbf{Alternative Outcomes.}
Alternative or unrelated outcomes can also serve as NCOs for two different types of APO variables. First, APO variables can potentially affect the IV, forming an alternative path that violates outcome independence (as in Panel~A of \ref{fig:dags_intro}). Alternative outcomes that are affected by the same APO variable can then serve as NCOs. For example, \cite{chetty2014measuring} leverage teachers' moves between schools to evaluate middle-school teacher value-added measures. The concern is that high-quality teachers may tend to move to schools that experience simultaneous improvements in student quality. Here, the APO variable is the unobserved changes in school quality. To evaluate this threat, \citeauthor{chetty2014measuring} use as NCOs test scores from subjects not taught by the teacher in question. If the NCO test finds that teacher quality is associated with better outcomes in subjects they do not teach, it would cast doubt on the design's validity. \citeauthor{chetty2014measuring} focus on middle-school teachers, as opposed to elementary school teachers who teach multiple topics. This is because the NCO assumption requires that the IV will not affect the alternative outcome directly. The IV should also not affect alternative outcomes indirectly via the treatment or outcome. 

The second type of APO variables potentially violates the exclusion restriction. The concern is that the IV affects an additional factor (the APO variable), which in turn affects the outcome (as in Panel B of \ref{fig:dags_intro}).
In the previously discussed example of \citet{angrist1996children}, the concern is that same-sex sibship IV may affect female labor supply due to hand-me-downs, thus forming an alternative path. To evaluate this concern, \citet{rosenzweig2000natural} check the correlation of same-sex sibship and an alternative outcome---clothing expenditure.\footnote{\beginQuotable[Rosenzweig]
In this example, the NCO is the APO variable itself, so the NCO assumption is trivially satisfied\endQuotable{} 
(see Section~\ref{SubSubSec:NCassumptions})}

\subsubsection{Common Types of Negative Control Instruments}
\textbf{Variables Similar to the IV That Do Not Affect the Treatment.} 
Researchers often choose NCIs that are similar to the IV but are presumed not to influence the treatment variable. These NCIs typically test outcome independence. They usually share many similarities with the IV and are thus likely to be correlated with the API variable. For example, as previously discussed (and illustrated in Panel C of \ref{fig:dags_intro}), \cite{nunn2014us} use US wheat production as their IV for US aid and consider the US production of other crops unrelated to US aid (e.g., oranges) as NCIs. These variables are similar, as they are both affected by the same API variables such as weather conditions. Similarly, \citet{ashraf2013out} replace their original IV, distance from Addis-Ababa, with distance from London, Tokyo, and Mexico City. 

\beginQuotable[SimulatedIV]
In some cases, researchers generate variables similar to the IV on their own. They construct a variable in a similar way to how the IV was constructed but remove the impact on the treatment. For example, \citet{de2020consumption} study the effect of peers' consumption on own consumption. As an IV, they use economic shocks to firms of distant peers. This IV will be correlated with shocks to large firms (as statistically, they are more likely to affect all workers, including distant peers), which could potentially affect the outcome in other ways. To test this, they use an NCI which they call a ``placebo'' IV---they calculate the same IV when replacing the real allocation of workers to employers with a random allocation, keeping firm sizes constant. 
\endQuotable{}

\textbf{IV Leads.} 
Future instances of the IV (IV leads) can often serve as effective NCIs for testing outcome independence. For example,  \citet{moretti2021effect} studies the effect of the size of high-tech clusters on productivity. As an IV, he uses predicted cluster size based on the expansion of local firms outside the cluster. \citeauthor{moretti2021effect} then ascertains that future predicted cluster size is also not correlated with productivity. This relies on the fact that IV leads, which are based on events that occur after the outcome, cannot influence it. IV leads share similarities with the IV and are therefore likely to be associated with the API variable. To satisfy the NCI assumption, the outcome must not influence future realizations of the IV. In the example of \citet{moretti2021effect}, regional productivity cannot affect the expansion of local firms in other locations.

When practitioners observe IV leads, they need to consider whether they expect the IVs to be autocorrelated. When the IVs are expected to be autocorrelated \citep[as in][]{moretti2021effect} an NCI strategy can be used. When the IVs are presumably uncorrelated, and NCO strategy can be applied, as discussed in the previous section.

\subsubsection{Underutilized Types of Negative Control Instruments}
\beginQuotable[CauseofIV]
\textbf{Causes of the IV.} In some cases, researchers may suspect violations of outcome independence through API variables that affect the IV and potentially also affect the outcome. In these cases, researchers can use variables that causally affect the IV as NCIs. This approach does not require full knowledge of the API variable.  In the example from \cite{angrist1991does}, the parents' quarter of marriage influences the child's quarter of birth (the IV) and qualifies as a valid NCI. In this case, API variables are any factors that influence a child's quarter of birth and are suspected to affect wages (the outcome).

Panel~B of \ref{fig:scenarios} illustrates how such NCIs work. When both the API variable and the NCI influence the IV, they are associated conditional on the IV \citep[in this DAG, the IV act as a \emph{collider};][]{pearl2009causality}. If,  conditional on the IV, the NCI is also associated with the outcome, it implies that a path exists between the NCI and the outcome through the IV and the API variable. This, in turn, implies that the API variable affects the outcome, violating outcome independence. 
\endQuotable{}

\beginQuotable[NCIAssumptionCauseIV]
To satisfy the NCI assumption, these NCIs should have no association with the outcome other than through the IV.\footnote{If such an association does exist, these variables must be used as controls.} In particular, the NCI cannot directly affect either the treatment or the outcome. In the example of \citet{angrist1991does}, using parents' quarter of marriage as an NCI requires assuming that marriage timing does not directly influence child schooling or wages. 
\endQuotable{}

\textbf{IV Side Effect Proxies.} An IV that not only affects the treatment but also produces a side effect may violate the exclusion restriction. This occurs if the side effect also affects the outcome. In such cases, proxies for the side effect (the API variable) may serve as NCIs. 

\beginQuotable[NCIExcRestExample]
Panel D of \ref{fig:dags_intro} illustrates a scenario where the IV influences the NCI through the API variable (therefore, the NCI is itself a side effect). 
For example, in the previously discussed context of \citet{jacob2007crime}, lagged weather ($Z$) serves as an IV for lagged crime ($X$) to study its impact on current crime ($Y$). Lagged weather also creates intertemporal displacement of economic activity (the API variable $U_4$). The concern is that intertemporal displacement of economic activity affects subsequent crime, thus violating the exclusion restriction. To evaluate whether this alternative path exists, Jacob et al. use traffic patterns, a proxy for economic activity, as an NCI ($NC_4$). Testing whether traffic patterns are correlated with crime, conditional on lagged weather, constitutes an NCI test for this alternative path. If a correlation exists, it suggests that the exclusion restriction is violated, as weather influences crime not only through past crime but also through displaced economic activity.\endQuotable{}
Alternatively, Panel~A of \ref{fig:scenarios} presents another type of side-effect proxy that influences the API variable rather than being influenced by it. For  \citeauthor{jacob2007crime}, that could be other factors that displace economic activity (e.g., payday schedule; see the discussion of this example in Section~\ref{SubSubSec:NCOtest}).

\subsubsection{Power Considerations When Choosing Negative controls}  
\label{subsubsec:NC_stat_power}

Negative control variables must satisfy U-comparability. This implies that these variables are indeed associated with the alternative path variable. Some negative controls might satisfy this condition but have only a weak association with the alternative path variable. In this case, if the IV design is not valid, the association between the NCO and the IV or the NCI and the outcome would be difficult to detect without having a large dataset. Therefore, power considerations suggest excluding negative control variables that have only a weak association with the alternative path variable, as they can lower test power. This mirrors the effect of irrelevant control variables in OLS. This issue is especially acute in NCI variables that are intentionally similar to the original IV. Such variables are often strongly correlated with the IV but only weakly correlated with the API variable conditional on the IV.

\subsection{Choosing a Statistical Test}
\label{SubSec:choosingTest}

As discussed in Section~\ref{SubSubSec:NCOtest}, negative control tests assess whether a negative control variable is (conditionally) independent of the IV or outcome. 
The choice of statistical test for conditional independence should take into account three primary considerations: the estimation method (e.g., 2SLS); the anticipated functional form of the relationship between the negative control and either the IV or the outcome (and the controls); and statistical power.
This section discusses both commonly used and underused conditional independence tests and presents examples using replication data from existing work.

\subsubsection{Commonly Used Tests}\label{SubSec:parametric}

The most commonly used NCO falsification tests are based on the original IV reduced-form equation, $Y = \alpha_Z Z +  \alpha'_CC + \epsilon$, but replace the outcome with an NCO (e.g., past outcomes). That is, the test estimates the model
\begin{equation}
\label{Eq:pseudo_outcome}
NC = \beta_Z Z +  \beta'_CC + \epsilon_{NC},
\end{equation} 
and evaluates the null hypothesis $H_0:\beta_Z=0$. This test evaluates outcome independence or exclusion restriction, as well as the rich covariates assumption (Corollary \ref{Coro:pseudo-outcome}). Because in 2SLS the rich covariates assumption is necessary for causal interpretation, this test provides useful information regarding both the validity of the IV design and of the 2SLS specification.
Since this test uses the same inferential framework as the original study, it can also expose errors in the inference method \citep{eggers2021placebo}. 

\beginQuotable[MulHyp]
When multiple negative controls are available, Model \eqref{Eq:pseudo_outcome} can be estimated separately for each negative control. However, correcting for multiple hypotheses is necessary, which reduces statistical power. An alternative approach is to jointly incorporate multiple negative controls using the model
\begin{equation}
\label{Eq:LinReg}
Z= \gamma'_{NC}NC +\gamma'C +  \epsilon_Z,
\end{equation}
where $NC$ now represents a vector of NCOs.\endQuotable{}\footnote{In most applications, if a vector of negative controls is associated with the IV, at least one of its components will be as well. \ref{AppSubSec:NCVecNotNC} provides a theoretical counterexample, but such cases are unlikely in practice as small parameter changes would reverse the result.}
\beginQuotable[Ftest]An F-test can be used to evaluate the null hypothesis 
$H_0: \gamma'_{NC}=0'$ under standard assumptions.\footnote{With robust standard errors, the common implementation calculates the Wald statistic, divides it by the degrees of freedom, and calculates a p-value from an F-distribution.}\endQuotable{}

For NCI tests, a similar approach applies, but the outcome is regressed on the NCI, controlling for the IV. Specifically, the NCI (denoted again NC) is added to the reduced-form equation
\begin{equation}\label{Eq:pseudo-IV-conditional}
    Y =  \theta_ZZ + \theta'_CC + \theta_{NC}NC +   \epsilon_{Y},
\end{equation}
and the null hypothesis is $H_0:\theta_{NC} =0$.
A common mistake in practice is omitting the IV from this regression (see Section \ref{sec:survey}). Doing so can lead researchers to reject the null, even when the IV design is valid. The IV can be omitted in the (rare) event that the NCI is independent of the IV.

This NCI test can still reject the null for valid IV designs (even when conditioning on the IV). Corollary \ref{Coro:pseudo-IV} shows that even if the IV design is valid, the null could still be rejected due to a violation of CSRF (Assumption~\ref{ass:linear_RF}). As discussed in Section \ref{SubSubSec:FuncForm}, the CSRF assumption is not necessary for causal identification and can be violated even under random assignment.\footnote{An exception is binary IV without controls, in which case CSRF is always satisfied.} 
Therefore, tests based on Model \eqref{Eq:pseudo-IV-conditional} may reject the null hypothesis, even though the IV design can identify a causal estimand with 2SLS. To relax this sensitivity to linearity assumptions, researchers may opt for semi-parametric or non-parametric tests, which are discussed next.

\subsubsection{Additional Tests}
\label{SubSec:semiparametric}

The tests discussed above inherit the basic functional form assumption as in the 2SLS specification. The NCO tests replace the outcome with the NCO in the reduced-form equation or posit the reverse linear model for the IV. The NCI tests include the NCI additively in the reduced form. Moreover, these tests only test the mean independence of the IV with the NCO or the outcome with the NCI. In some cases, researchers should use more general tests.

\beginQuotable[nonLinear]
Researchers should consider replacing the functional form assumptions in two cases. First, tests with more flexible functional forms can help relax undesirable functional form assumptions. In particular, when using an NCI, researchers should test for non-linear associations whenever possible to avoid testing the CSRF assumption, which is unnecessary. By contrast, the previously discussed linear NCO tests examine the rich covariates assumption, which is necessary for 2SLS.
Therefore, tests based on Models  \eqref{Eq:pseudo_outcome} or \eqref{Eq:LinReg} are often preferable in such contexts.

Second, more flexible tests are useful when researchers suspect a non-linear association between the NCO and the IV or between the NCI and the outcome. Such non-linear associations also indicate a violation of the IV assumptions and should therefore be tested whenever possible. For example, in studies using crops as an IV \citep[as in][]{nunn2014us}, one might be concerned that extreme weather conditions, such as unusually high or low temperatures, could affect both crop yield and conflict incidence (the outcome). In this case, Model \eqref{Eq:pseudo-IV-conditional} might fail to detect a non-monotonic relationship between an average temperature NCI and the outcome.

To estimate more complex functional forms, researchers can include higher-order polynomial terms or interactions in regression models. Alternatively, they can use semi- or non-parametric tests. The next section discusses a few examples, which we implement in our application examples.  

When using estimation methods other than 2SLS, researchers should consider more general conditional independence tests that assess relationships beyond mean independence. For example, IV quantile regression \citep[e.g.,][]{chernozhukov2008instrumental} relies on the broader notion of conditional independence. In such cases, researchers can use quantile regression of the IV on the NCO or the outcome on the NCI, with appropriate controls. A variety of other conditional independence tests can also be considered \citep[see][for recent surveys]{heinze2018invariant, li2020nonparametric}.
The choice of test depends on the specific context, as there is no uniformly optimal conditional independence test.\footnote{\citet{Shah2020hardness} show that for continuous distributions, there is no conditional independence test that is uniformly valid and is simultaneously powerful against any conditional dependence types.} 
Naturally, more complex tests require larger datasets or low-dimensional covariates for reliable implementation. Therefore, these tests may be less informative in small samples or in settings with many covariates. 
\endQuotable{}  

\subsubsection{Examples of Applications} 
\label{subsubsec:examples_of_applications}

\ref{tab_applications_nco} presents the NCO tests results using data from \cite{autor2013china} and \cite{deming2014using}. Column~(2) shows that a test based on a single NCO fails to reject the null hypothesis in both cases.\footnote{For \cite{autor2013china}, Column~(1) replicates the original NCO test from their paper, without control variables. They find the IV is significantly associated with the lagged outcome, albeit with the opposite sign from the main analysis. This association becomes insignificant when all controls are included.} However, Column~(3) demonstrates that using multiple NCOs and applying a Bonferroni correction leads to rejection of the null, as does a joint F-test (Column~4). These results underscore how theory-guided inclusion of additional NCOs enhances test power. 

To examine non-linear associations, we implement the common semi-parametric approach of Generalized Additive Models \citep[GAMs; see][]{hastie1990generalized,wood2006generalized}. We express the IV as an additive combination of smooth functions of the controls ($C$) and NCOs ($NC$):  
\begin{equation}
\label{Eq:GAM}
Z =  \sum_j f_j(C_j) + \sum_k g_k(NC_k) + \epsilon_Z,
\end{equation}  
where $f_j$ and $g_k$ are smooth functions estimated via splines. We test whether $g_k = 0$ for all $k$ to assess whether the NCOs are conditionally independent of the IV.\footnote{A similar GAM test can be implemented for NCI tests, by adding smooth functions to Model~\eqref{Eq:pseudo-IV-conditional}.}

To test rich covariates (Assumption~\ref{ass:RC}) within this framework, researchers can restrict $f_j$ to be linear (i.e., $f_j(C_j) = \gamma_j C_j$) while allowing $g_k$ to remain nonlinear. 
Such a model is useful when using 2SLS, which requires rich covariates, while suspecting a strong nonlinear association between the NCO and the IV. 
Columns~(5) and~(6) in \ref{tab_applications_nco} implement a GAM test with and without assuming linearity in the controls. As expected, the GAM test underperforms in smaller samples.  

\ref{tab_applications_nci} presents the NCI test results for \cite{nunn2014us} and \cite{ashraf2013out}. Columns~(1) and~(2) show the results of regressing the outcome on the NCI, both with and without conditioning on the IV. In both studies, conditioning on the IV is necessary. 
For \cite{nunn2014us}, failure to condition leads to rejection of the null, suggesting a false rejection due to the path between the NCI and the outcome through the IV. For \cite{ashraf2013out}, the null is not rejected in either case, likely due to limited statistical power. 
Columns~(3) and~(4) of \ref{tab_applications_nci} implement multiple separate NCI tests with Bonferroni corrections and joint F-tests using all NCIs. In both approaches, the null is not rejected.\footnote{Due to the small sample sizes relative to the number of control variables, proper estimation of GAM models is infeasible for both studies. For \cite{nunn2014us}, we estimate a GAM model assuming linear controls; see \ref{SubSec:NunnQianDetails}.}

\subsection{Interpreting the Test Results} \label{SubSec:aftermath}

\textbf{Rejection of the Null.}\label{sec:failures}
Rejecting the null hypothesis in a parametric linear negative control test, such as an F-test, may indicate a violation of the IV assumptions or failure of the linearity assumption (rich covariates for NCO, CSRF for NCI). Researchers can further investigate by directly testing the linearity assumption \citep[e.g., using a Regression Equation Specification Error Test (RESET) with control variables only;][]{ramsey1969tests}. With sufficient sample size, semi- or non-parametric tests can also test outcome independence or the exclusion restriction without relying on strict functional form assumptions.

When using multiple negative controls, identifying which ones drive the rejection can provide important insights. A diagnostic scatter plot of the correlation of each negative control with the IV against its correlation with the outcome can highlight potential alternative pathways.
\ref{corplot_lott_VA.pdf} displays such diagnostics for \cite{deming2014using}. \beginQuotable[demingControl]\cite{deming2014using} uses predicted school value-added, based on school lotteries, to evaluate school value-added measures. Specifically, the IV is the value added of the student's preferred school if they won the lottery and of their neighborhood school if they did not (see \ref{tab_applications_intro} for details). This IV satisfies outcome independence only when controlling for the relevant school value-added measures and the probability of winning the lotteries. The scatter plot reveals that the NCO with the strongest correlation with the IV is the value added of the neighborhood school. This occurs because the original 2SLS analysis does not control for the neighborhood school value added. This diagnostic thus identifies a fixable problem in the IV construction. This problem is resolved when using the original lottery results as an IV, as outcome independence is satisfied conditional only on the winning probabilities (i.e., the schools the student applied to).\endQuotable{}\footnote{While estimates using the original lottery as the IV are noisier, we cannot reject the main conclusions.}

One caveat of this diagnostic exercise is that correlations with negative controls may not directly reflect the strength of alternative paths. Negative controls are proxies, and so the strength of their correlations with the IV or outcome depends on the strength of their correlation with the alternative path variables. Thus, weak correlations between a negative control and the IV or outcome might still mask strong alternative paths.

\textbf{Non-Rejection of the Null.}
\label{subsubsec:success}
As discussed in Section \ref{Sec:theory}, failure to reject the null does not imply that the IV is valid. Two concerns remain. First, the IV may still be invalid due to alternative path variables not captured by the NCO or NCI used in the test. For example, a quasi-random allocation to teachers that is found to be uncorrelated with students' neighborhoods could still be correlated with students' abilities within neighborhoods. 
Second, an invalid IV design may pass the test due to limited statistical power.

\section{Conclusion}
\label{sec:conclusion}
This paper provides a thorough examination of the assumptions underlying negative control tests for IV designs. Our analysis clarifies existing practices and emphasizes several issues of direct practical relevance.
First, most current implementations of NCI tests fail to condition on the original IV, which could lead to the unwarranted rejection of valid IV designs. 
Second, common negative control tests assess not only the outcome independence and exclusion restriction assumptions but also assess specific functional form assumptions. Because these assumptions are replaceable and sometimes unnecessary, researchers should distinguish between the essential IV identification conditions and the ancillary functional form assumptions when interpreting and considering additional tests. Third, our analysis clarifies what variables can serve as negative controls. These include variables that are rarely used in practice, such as variables that causally affect the IV. Moreover, in some cases, negative control variables are readily available in researchers' datasets and should be used to construct more powerful negative control tests. We hope this paper will foster a more systematic and efficient use of negative control falsification tests in empirical IV designs.

\beginQuotable[Discussion]
While this paper focused on the role of negative controls in testing the IV assumptions, negative controls can also be used for other purposes. As discussed, when the IV design is valid, NCOs can be used to test functional form assumptions. NCOs can also improve the estimation precision, for example, when included as control variables. Since NCOs are correlated with the outcome but not with the IV, including them as controls can improve precision. By contrast, NCIs cannot be used similarly. They do not test a necessary functional form assumption, as we discussed in Section \ref{SubSubSec:FuncForm}. Moreover, including NCIs in the reduced form will only decrease precision because they are correlated with the IV and not with the outcome.\footnote{In case of heteroskedasticity, at face value, NCIs may be used to improve precision in valid designs by including additional moment conditions for their orthogonality with the error as in \citet{cragg1983more}. However, this can also be achieved by including other functions of the IV.}
\endQuotable{}

{
\setstretch{1.15}  

\bibliographystyle{aea}
\bibliography{nc}
}


\begin{figure}[p]
\caption{Negative Control Falsification Tests: Graphical Illustrations \label{fig:dags_intro}
}
\centering
\vspace{2em}
Negative Control Outcome Tests\\
\vspace{1em}
\begin{subfigure}{0.45\textwidth}
\centering
\begin{tikzpicture}
\node (1) {$Z$};
\node [above right = 0.75cm and 0.25cm of 1](2) {$W$};
\node [right =   1cm of 1](3) {$X$};
\node [right =   1cm of 3](4) {$Y$};
\node [left = 1cm of 1](5) {\color{red} $U_1$};
\node [below  = 0.5cm of 1](6) {\color{blue}$NC_1$};
\draw[Arrow] (1) -- (3);
\draw[Arrow] (2) -- (3);
\draw[Arrow] (2) -- (4);
\draw[Arrow] (3) -- (4);
\draw[DashedRedArrow] (5) to  (1);
\draw[Arrow] (5) to [out=-25, in=-135] (4.west);
\draw[Arrow] (5) -- (6);
\end{tikzpicture}\hspace{2em}
\caption{$NC_1\cancel{\indep}Z$ implies that the dashed arrow exists, thus violating the outcome independence.}
\label{fig:nco_dags_independence}
\end{subfigure}
\hfill
\begin{subfigure}{0.45\textwidth}
\centering
\begin{tikzpicture}
\node (1) {$Z$};
\node [above right = 0.75cm and 0.25cm of 1](2) {$W$};
\node [right =   1.25cm of 1](3) {$X$};
\node [right =   1.25cm of 3](4) {$Y$};
\node [below =   0.5cm of 4](6) {\color{blue}$NC_2$};
\node [below =   0.5cm of 3](5) {\color{red}$U_2$};
\draw[Arrow] (1) -- (3);
\draw[Arrow] (2) -- (3);
\draw[Arrow] (2) -- (4);
\draw[Arrow] (3) -- (4);
\draw[DashedRedArrow] (1) -- (5);
\draw[Arrow] (5) -- (4);
\draw[Arrow] (5) -- (6);
\end{tikzpicture}
\caption{$NC_2\cancel{\indep}Z$ implies that the dashed arrow exists, thus violating the exclusion restriction.}
\label{fig:nco_dags_exclusion}
\end{subfigure}

\vspace{3em}

Negative Control Instrument Tests\\
\vspace{1em}
\begin{subfigure}{0.45\textwidth}
\centering
\begin{tikzpicture}
\node (1) {$Z$};
\node [above right = 0.75cm and 0.25cm of 1](2) {$W$};
\node [right =   1cm of 1](3) {$X$};
\node [right =   1cm of 3](4) {$Y$};
\node [left = 1cm of 1](5) {\color{red} $U_3$};
\node [below  = 0.5cm of 1](6) {\color{blue}$NC_3$};
\draw[Arrow] (1) -- (3);
\draw[Arrow] (2) -- (3);
\draw[Arrow] (2) -- (4);
\draw[Arrow] (3) -- (4);
\draw[Arrow] (5) to  (1);
\draw[DashedRedArrow] (5) to [out=-25, in=-135] (4.west);
\draw[Arrow] (5) -- (6);
\end{tikzpicture}
\caption{$NC_3\cancel{\indep}Y|Z$ implies that the dashed arrow exists, thus violating the outcome independence.}
\label{fig:nci_dags_independence}
\end{subfigure}
\hfill
\begin{subfigure}{0.45\textwidth}
\centering
\begin{tikzpicture}
\node (1) {$Z$};
\node [above right = 0.75cm and 0.25cm of 1](2) {$W$};
\node [right =   1.25cm of 1](3) {$X$};
\node [right =   1.25cm of 3](4) {$Y$};
\node [below =   0.5cm of 4](6) {\color{blue}$NC_4$};
\node [below =   0.5cm of 3](5) {\color{red}$U_4$};
\draw[Arrow] (1) -- (3);
\draw[Arrow] (2) -- (3);
\draw[Arrow] (2) -- (4);
\draw[Arrow] (3) -- (4);
\draw[Arrow] (1) -- (5);
\draw[DashedRedArrow] (5) -- (4);
\draw[Arrow] (5) -- (6);
\end{tikzpicture}
\caption{$NC_4\cancel{\indep}Y|Z$ implies that the dashed arrow exists, thus violating the exclusion restriction.}
\label{fig:nci_dags_exclusion}
\end{subfigure}

\bigskip

\exhibitnotes{The figure illustrates how negative control tests assess the validity of IV designs. In all the panels, $X$ represents the endogenous treatment variable, $Y$ the outcome, $Z$ the IV, and $W$ a potential confounder that motivates the use of IV. The variables $U_i$ are unobserved variables that threaten identification when the dashed arrows exist. The top panels (A and B) depict negative control outcome tests. IV validity is threatened by the concern that $U_1$ or $U_2$ (\emph{alternative path outcome} (APO) variables) are related to the IV, thus violating the outcome independence (Panel A) or exclusion restriction (Panel B) assumptions. An observed negative control outcome ($NC_1$ or $NC_2$) related to each APO variable can be used to evaluate the presence of the problematic association by testing whether $NC_i \indep Z $. The bottom panels (C and D) depict negative control instrument tests, addressing concerns that $U_3$ or $U_4$ (\emph{alternative path instrument} (API) variables) might be related to the outcome, thus violating the outcome independence (Panel C) or exclusion restriction (Panel D) assumptions. An observed negative control instrument ($NC_3$ or $NC_4$) related to each API variable can examine these concerns by testing whether $NC_i\indep Y|Z$.}

\end{figure}


\begin{figure}[p]
\caption{Illustration of Scenarios Related to Negative Control Instrument Tests\label{fig:scenarios}}

\begin{subfigure}{0.45\textwidth}
    \centering
\begin{tikzpicture}
\node (1) {$Z$};
\node [above right = 0.75cm and 0.25cm of 1](2) {$W$};
\node [right =   1.25cm of 1](3) {$X$};
\node [right =   1.25cm of 3](4) {$Y$};
\node [below =   0.5cm of 1](6) {\color{blue}$NC$};
\node [below =   0.5cm of 3](5) {\color{red}$U$};
\draw[Arrow] (1) -- (3);
\draw[Arrow] (2) -- (3);
\draw[Arrow] (2) -- (4);
\draw[Arrow] (3) -- (4);
\draw[Arrow] (1) -- (5);
\draw[DashedRedArrow] (5) -- (4);
\draw[Arrow] (6) -- (5);
\end{tikzpicture}
\caption{An unconditional NCI test}
\end{subfigure}
\hfill    
\begin{subfigure}{0.45\textwidth}
\begin{tikzpicture}
\node[rectangle,draw] (1) {$Z$}; 
\node [above right = 0.25cm and 0.25cm of 1](2) {$W$}; 
\node [right =   1cm of 1](3) {$X$}; 
\node [right =   1cm of 3](4) {$Y$}; 
\node[left = 1cm of 1](5) {\color{red} $U$}; 
\node [above  = 0.25cm of 5](6) {\color{blue}$NC$}; 
\draw[Arrow] (1) -- (3); 
\draw[Arrow] (2) -- (3); 
\draw[Arrow] (2) -- (4); 
\draw[Arrow] (3) -- (4); 
\draw[Arrow] (5) to  (1); 
\draw[DashedRedArrow] (5) to [out=-25, in=-135] (4.west);
\draw[Arrow] (6) -- (1);
\end{tikzpicture}
\caption{NCI causally affecting the IV}
\end{subfigure}
\bigskip

\exhibitnotes{Each panel represents a different scenario related to negative control instrument (NCI) tests. 
In both scenarios, $X$ is the endogenous treatment variable, $Y$ is the outcome, $Z$ is the IV, and $W$ is a potential confounder that motivates the use of the IV. The validity of the IV design is challenged by the potential alternative paths. Panel~A demonstrates an NCI scenario where conditioning on the IV is not necessary. In this example, $U$ is an unobserved API variable that poses a threat to identification. If it is related to the outcome (through the dashed arrow), the exclusion restriction is violated. An observed NCI ($NC$) that affects the API variable ($U$) can be used to evaluate the presence of the problematic association by implementing an unconditional independence test for the null hypothesis $H_0:NC\indep Y$. Panel~B shows that a variable causally affecting the IV can serve as a valid NCI. The variable $U$ is an unobserved API variable that poses a threat to identification---if it is related to the outcome (through the dashed arrow), outcome independence is violated. The square around $Z$ symbolizes the conditioning on the IV. An observed NCI ($NC$) that affects the IV ($Z$) can be used to evaluate the presence of the suspected association by testing $NC\indep Y|Z$. Specifically, if $NC\cancel{\indep} Y|Z$, then the path $NC\rightarrow Z \leftarrow U \rightarrow Y$ exists and hence $U \cancel{\indep} Y | Z$.}
\end{figure}

\clearpage


\begin{landscape}
\addtab{tab_intro_examples}{Examples of Negative Control Tests for IV in Economics}{}
\end{landscape}

\begin{landscape}
\addtab{tab_survey}{Current Use of Falsification Tests for IV in Economics}{The table shows the results of our survey of highly cited articles employing instrumental variable (IV) designs published in leading economics journals from 2013 to 2023. The sample includes all articles from this period in the Review of Economic Studies (REStud), American Economic Review (AER), Journal of Political Economy (JPE), Quarterly Journal of Economics (QJE), and Econometrica (ECMA) that used IV designs and had significant citation counts on Google Scholar (over 300 citations for papers until 2020, and over 100 for those published after 2020). We examined these papers for their use of falsification tests. Column~(2) shows the proportion of papers that employ any falsification test. Columns (3)--(8) report the fraction of papers that implemented different types of falsification tests out of all papers that implemented any falsification test. 
Columns (3)--(5) categorize the tests into negative control outcome, negative control instrument, and other types of falsification tests, respectively. The fractions do not sum up to one, as some papers employed multiple test types. Columns (6)--(8) provide the share of papers using simple linear designs where the reduced form equation is modified by: replacing the outcome with an NCO, replacing the IV with an NCI, and adding the NCI (while still conditioning on the original IV). Column~(9) reports the median number of negative control variables used. \ref{AppSec:survey} provides additional details on the survey construction.}
\end{landscape}

\addtab{tab_applications_nco}{Illustrative Applications of Negative Control Outcome Tests}{
This table presents $p$-values from different NCO tests using data from \citet{autor2013china} and \citet{deming2014using}.
Column (1) replicates one of the original falsification analyses, in which \citeauthor{autor2013china} replaced the outcome with the NCO in the same 2SLS specification as their main analysis (ibid., Table 2, Part II). \citeauthor{deming2014using} conducted no falsification tests.  Columns 2--6 report $p$-values obtained from additional tests that include the same controls as in the most exhaustive specification of the original analyses. Column (2) reports a single test using one NCO, where the outcome is replaced with the NCO in the reduced form regression. For \citeauthor{autor2013china} the single NCO is the lagged outcome (in 1970), which is the same NCO reported in Column (1); for \citeauthor{deming2014using} this is lagged test scores (2002).
Column (3) presents a Bonferroni-corrected $p$-value for multiple tests using all the NCOs, with the same specification as Column (2). 
Column (4) uses an F-test (Model~\eqref{Eq:LinReg}) with all NCOs jointly.
Columns (5) and (6) use GAM tests with linear and smoothed controls, respectively (Model~\eqref{Eq:GAM}).
}
\vspace{.5in}
\addtab{tab_applications_nci}{Illustrative Applications of Negative Control Instrument Tests}{
This table presents $p$-values from different NCI tests using data from \citet{nunn2014us} and \citet{ashraf2013out}, applying their original sets of NCIs (three and ten NCIs, respectively). Column (1) shows a single linear NCI test that, inappropriately, does not condition on the IV. The NCI with the lowest $p$-value is shown (grape production for \citeauthor{nunn2014us} and distance from Mexico City for \citeauthor{ashraf2013out}).
Columns (2)--(4) condition on the IV: Column (2) implements a proper linear NCI test (Model~\eqref{Eq:pseudo-IV-conditional}) using the same NCI as Column (1); Column (3) applies Bonferroni correction for multiple linear NCI tests; Column~(4) uses an F-test for all NCIs jointly.

}

\clearpage

\appendix

\begin{appendices}

\renewcommand{\thesection}{Appendix \Alph{section}} 
\renewcommand{\theHsection}{appendix.\Alph{section}} 

\setlength{\topmargin}{0in}        
\setlength{\textheight}{8.5in}       
\setlength{\headheight}{14pt}      
\setlength{\headsep}{20pt}         
\setlength{\footskip}{20pt}        

\makeatletter
\setlength{\textwidth}{6.5in}   
\setlength{\oddsidemargin}{0in}  
\setlength{\evensidemargin}{0in} 
\makeatother

\renewcommand{\baselinestretch}{1.2}
\normalsize
\setlength{\parskip}{6pt}
\setlength{\parindent}{0pt}

\pagenumbering{arabic}
\renewcommand*{\thepage}{\roman{page}}

\renewcommand{\figurename}{}
\renewcommand{\thefigure}{Appendix Figure~\Alph{section}\arabic{figure}}
\setcounter{figure}{0}

\renewcommand{\tablename}{}
\renewcommand{\thetable}{Appendix Table~A\arabic{table}}
\setcounter{table}{0}

\renewcommand{\theequation}{A\arabic{equation}}  
\setcounter{equation}{0}

\setcounter{ass}{0} 
\renewcommand{\theass}{A\arabic{ass}}

\setcounter{dfn}{0} 
\renewcommand{\thedfn}{A\arabic{dfn}}
\setcounter{thm}{0} 

\renewcommand{\thethm}{A\arabic{thm}}

\begin{center}
    {\LARGE \textbf{Online Supplementary Appendices}} \\[1em]
    {\LARGE \textbf{Negative Control Falsification Tests for\\ Instrumental Variable Designs}} \\[1em]
    {Oren Danieli (orendanieli@tauex.tau.ac.il), Daniel Nevo, Itai Walk, Bar Weinstein, Dan~Zeltzer} \\[1em]
    
    \today
\end{center}

\section{Additional Tables and Figures}

\addtab[1]{tab_applications_intro}{Papers Used to Illustrate Applications of Negative Control Tests}{This table provides contextual details on the papers we use for our negative control application examples in \ref{tab_applications_nco} and \ref{tab_applications_nci}. Columns (1)--(3) specify which outcome, treatment, and IV were used in these papers, respectively. Column (4) depicts the negative controls used in the original analysis, and Column (5) presents the number of negative controls used. Column (6) presents the number of negative controls used in the analysis for \ref{tab_applications_nco} and \ref{tab_applications_nci}, including the original negative controls, as well as additional valid negative controls we found in the original data.}
\begin{figure}[htbp]
\caption{Steps for Implementing Negative Control Tests}
\label{fig:steps}
\begin{mdframed}
\scalebox{0.85}{\begin{minipage}{1.0\linewidth} 

\begin{enumerate}
\item \textbf{Articulate Potential Threats to IV Design:}
Characterize potential threats as \emph{alternative paths} between the instrument (\(Z\)) and the outcome (\(Y\)) through an \emph{alternative path variable} (\(U\)):
\begin{enumerate}
\item Which assumption is potentially violated, outcome independence (Assumption~\ref{Ass:Indep}) or exclusion restriction (Assumption~\ref{Ass:ExclRestriction})? 
\item What is the type of the alternative path variable (\(U\))?
\begin{itemize}
\item Alternative Path Outcome (APO):  
\(U\) is associated with the outcome (\(Y\)), and the researchers are concerned about a potential association with the IV (\(Z\)) (Definition~\ref{dfn:apo_var1}).  
\item Alternative Path Instrument (API):  
\(U\) is associated with the IV (\(Z\)), and the researchers are concerned about a potential association with the outcome (\(Y\)) (Definition~\ref{dfn:api_var1}).  
\end{itemize}  
        \end{enumerate} 
         *Both outcome independence and exclusion restrictions can be violated through either APO or API variables. See Section~\ref{subsec:threat} for definitions and \ref{fig:dags_intro} for illustrations. 
    \item \textbf{Survey Available Data to Identify \emph{Negative Control} Variables:}
        \begin{itemize}
            \item Negative Control Outcomes (NCOs) proxy for APO variables.  
            \item Negative Control Instruments (NCIs) proxy for API variables.  
        \end{itemize}  
        Section~\ref{SubSubSec:NCassumptions} discuss the assumptions these proxies need to satisfy.  
        Examples are discussed in Section~\ref{SubSec:chooneNC}.  

    \item \textbf{Select Negative Control Test Specification:}
        \begin{itemize}
            \item NCO tests: assess the independence between the IV and NCO variables. Regress IV on NCOs and controls to also test Rich Covariates (Assumption \ref{ass:RC}).
            \item NCI tests: evaluate the independence between the NCI variables and the outcome, typically conditional on the IV. 
        \end{itemize}
    For IVs valid conditional on control variables, negative control tests should also condition on these controls. 

    \item \textbf{Interpret Results:}
        
 If the null is rejected:
  \begin{itemize}
      \item Investigate violations of IV design or functional form assumptions separately.  
      \item For multiple NCs, identify which are most predictive of the IV or outcome to diagnose specific threats.  
   \end{itemize}
If the null is not rejected:
  \begin{itemize}
      \item Consider insufficient power or weak negative controls.  
    \item Note that untested alternative paths may still exist.  
\end{itemize}
        
\end{enumerate}
\end{minipage}}

\end{mdframed}
\end{figure}

\addfig{corplot_lott_VA.pdf}{Correlations of NCOs with the IV and the Outcome in \cite{deming2014using}}{This figure shows a scatter plot of the absolute value of the correlation of different NCOs with the IV (on the y-axis) and the outcome (on the x-axis) in \citet{deming2014using}. The NCOs, the IV, and the outcome were first residualized by regressing them on all control variables and lottery fixed-effects. Each observation is one NCO.  For presentation purposes, NCOs are grouped into categories denoted by marker shape. The different year markers refer to groups of students' test scores from that year. The VA marker denotes the value added of the schools listed by students as their 1st, 2nd, and 3rd submitted preferences, as well as their neighborhood school's VA (labeled Neighborhood School). See Section~\ref{SubSec:DemingDetails} for details.}

\clearpage

\section{Details on Survey of Common Practices}
\label{AppSec:survey}
\textbf{Sample Construction.}
We used Google Scholar in November 2023 to assemble the list of relevant papers. We searched the terms ``instrumental variable,'' ``instrument,'' ``2SLS,'' and ``IV.'' We restricted the sample to articles with over 300 citations or, if published after 2020, over 100 citations. We examined all articles satisfying these criteria, published in five top-ranked economics journals: Review of Economic Studies, American Economic Review, Journal of Political Economy, Quarterly Journal of Economics, and Econometrica. Overall, our survey includes 140 papers.

We then searched the papers for strings related to falsification testing. This included ``falsification,'' ``negative control,'' ``balance,'' ``balancing,'' ``valid,'' and ``validity.'' Papers that did not include any of these strings were marked as not having any falsification test. We manually coded the type of falsification test for papers that included one of these strings. The results are summarized in \ref{tab_survey} and discussed in Section~\ref{sec:survey}.

\textbf{Other Falsification Tests.}
As discussed in Section~\ref{sec:survey}, we categorized all falsification tests used in surveyed papers into NCO tests, NCI tests, and other falsification tests. Other falsification tests include the following: negative control tests in non-IV settings, which examine only the first or second stage in a 2SLS estimation; ``placebo population'' analyses \citep{eggers2021placebo,glymour2012credible,keele2019falsification}, which involve repeating the analysis using a different population where the IV is not expected to affect the outcome; validating that the results are robust to including additional control variables; and using an over-identification test when more than one IV is available.

\beginQuotable[DAGtheory]
\section{Formalization of IV Negative Control Tests Using DAG Theory}
\label{AppSec:DAG}

In this section, we present an alternative formalization of the theory of negative controls for IV designs using the language of DAGs. We first summarize fundamental concepts from DAG theory.\footnote{This is not an exhaustive overview of DAG use for causality, but only the elements necessary for our purposes. See \cite{pearl2009causality} for detailed presentation and theory.} The DAGs we use intuitively throughout the paper can be formalized using the theory reviewed in this section. 

\subsection{Background DAG Theory}
A directed graph is a set of nodes and directed edges. A \textit{directed path} on a graph between two nodes, $X_1$ and $X_2$, is a sequence of edges, such that the first edge starts at $X_1$, each edge starts at the arrowhead of the former edge, and the last edge ends at $X_2$. A directed \emph{acyclic} graph (DAG) contains no cycles; namely, there are no directed paths from a node to itself. While DAGs can be used to represent assumptions on joint probability functions without notions of causality, here we interpret all DAGs as causal, such that the arrows represent a causal relationship. Consider a DAG denoted by $G$. The set of \textit{parents} of node $X_j$, denoted by $PA_j$, is the set of all nodes with direct arrows into $X_j$. The \textit{descendants} of $X_j$ are all nodes with a directed path of any number of edges (including a single edge) from $X_j$ to those variables; these are variables causally affected by $X_j$, directly or indirectly.

DAGs can be used to encode conditional independence assumptions on the joint distribution of variables represented by nodes in the DAG. Each DAG represents an infinite number of probability functions sharing the same conditional independence structure. 
The joint distribution of all variables in the DAG is the product of the conditional probability function of each variable $X_j$ given its parents $PA_j$.\footnote{This assumption is called the Markov property.} Formally, for $M$ variables $X_1,...X_M$, this factorization is 
$$
P(x_1,...,x_M)=\prod_{j=1}^MP(x_j|PA_j),
$$ 
where lower cases are realizations. Any probability function $P$ that admits to the above factorization is said to be compatible with a DAG $G$.

A key result of DAG theory is the translation of the structure represented by a DAG into conditional independence conditions. This translation relies on the concept of $d$-separation, a graphical condition. 

\begin{dfn}[$d$-separation \citep{pearl2009causality}]
\label{dfn:d-sep}
    A path $p$ from node $X_1$ to node $X_2$ in graph $G$ is said to be blocked by a set of nodes $\mathcal{A}$ if and only if
    \begin{enumerate}
        \item The path $p$ contains a chain $X_1 \rightarrow A \rightarrow X_2$ or a fork $X_1 \leftarrow A \rightarrow X_2$ such that the middle node $A$ is in $\mathcal{A}$, or
\item The path $p$ contains a collider structure $X_1 \rightarrow B \leftarrow X_2$ such that the middle node, the collider $B$, is not in $\mathcal{A}$ and
such that no descendant of $B$ is in $\mathcal{A}$.
    \end{enumerate}
The set $\mathcal{A}$ is said to $d$-separate node $X_1$ from node $X_2$ if and only if $\mathcal{A}$ blocks every path between $X_1$ and $X_2$. In this case, we write $$(X_1\indep X_2| \mathcal{A})_G.$$
\end{dfn}

Building on the above structure, the following theorem states the direct implication of $d$-separation, a graphical condition, on conditional independence, a probabilistic statement. 
\begin{thm}[\citealt{pearl2009causality}]\label{thm:DAG2CI}
    Let $X_1,X_2$ be two variables that are $d$-separated by a set of variables $\mathcal{A}$ in graph $G$, $(X_1\indep X_2| \mathcal{A})_G$. Then,  
    $$
    X_1\indep X_2|\mathcal{A}
    $$
    in all probability functions compatible with $G$.\footnote{Often $X_1\indep X_2|\mathcal{A}$ is written as $(X_1\indep X_2|\mathcal{A})_P$. We omit the $P$ subscript for simplicity. }
\end{thm}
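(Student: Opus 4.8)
The plan is to establish the \emph{soundness} of the $d$-separation criterion: that $d$-separation in $G$ forces conditional independence in every $P$ compatible with $G$. Rather than manipulate the directed factorization $P(x_1,\ldots,x_M)=\prod_j P(x_j\mid PA_j)$ head-on, I would route the argument through undirected graphical models, where ``separation implies independence'' is transparent. The proof proceeds in four stages: restrict to an ancestral set; moralize; reduce $d$-separation to ordinary undirected separation; and then read off the independence from a clique factorization.

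First I would pass to the smallest \emph{ancestral} set $A\supseteq\{X_1,X_2\}\cup\mathcal{A}$, meaning the closure of these nodes under taking parents. The key observation is that the complement $V\setminus A$ is closed under descendants, so marginalizing out the variables outside $A$ in reverse topological order removes them one at a time, each being a sink among the remaining nodes (all of its children are descendants, hence already removed). At each such step the corresponding factor $P(x_j\mid PA_j)$ marginalizes to one, so the surviving marginal $P(x_A)$ inherits exactly the factorization induced by the subgraph $G_A$. This lets me work entirely inside $G_A$.

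Second, I moralize $G_A$: drop edge directions and join any two nodes that share a common child, producing an undirected graph $(G_A)^m$. The purpose of moralization is that each family $\{X_j\}\cup PA_j$ becomes a clique, so every factor $P(x_j\mid PA_j)$ is a function of a single clique of $(G_A)^m$; hence $P(x_A)$ is a product of clique potentials and $P$ is Markov with respect to $(G_A)^m$. Third---and this is the combinatorial heart of the argument---I would prove the purely graph-theoretic equivalence that $X_1$ and $X_2$ are $d$-separated by $\mathcal{A}$ in $G$ (Definition~\ref{dfn:d-sep}) if and only if every $X_1$--$X_2$ path in $(G_A)^m$ passes through $\mathcal{A}$. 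The forward direction amounts to checking that each blocking configuration---chain, fork, and collider-with-no-conditioned-descendant---corresponds to a genuine cut of the moral graph; the collider case is precisely what restricting to the ancestral set and moralizing are designed to handle, since a collider all of whose descendants are unconditioned lies outside $A$ and so contributes no moralizing edge.

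Finally, given undirected separation, the conclusion is immediate. If $\mathcal{A}$ separates $X_1$ from $X_2$ in $(G_A)^m$, then with $\mathcal{A}$ removed the two nodes fall in distinct connected components; since every clique lies within $\mathcal{A}$ together with a single component, the clique factorization splits as $P(x_A)=g\,h$, where $g$ depends only on $X_1$'s component and $\mathcal{A}$ and $h$ on the remaining variables and $\mathcal{A}$. Marginalizing out the variables outside $\{X_1,X_2\}\cup\mathcal{A}$ preserves this product structure, which is exactly $X_1\indep X_2\mid\mathcal{A}$. I expect the $d$-separation/moral-separation equivalence in the third stage to be the main obstacle, as it is the one genuinely graph-theoretic step---the colliders make it non-trivial---whereas the ancestral reduction, the moralization, and the final split are routine bookkeeping with sums and products.
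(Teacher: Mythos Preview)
The paper does not supply its own proof of this theorem: it is stated with attribution to \cite{pearl2009causality} and used as a black box, so there is no in-paper argument to compare against. Your outline is the standard Lauritzen--Dawid--Larsen--Leimer route (ancestral restriction, moralization, then the purely graph-theoretic equivalence between $d$-separation in $G$ and ordinary separation in the moral graph of the ancestral set, followed by the clique-factorization split), and it is correct as sketched. Two remarks worth keeping in mind as you fill in details: (i) you do not need positivity or Hammersley--Clifford in stage four, because the clique factorization is inherited directly from the DAG conditionals rather than recovered from a Markov property; and (ii) in stage three, the ``only if'' direction you sketch is the one actually needed for the theorem, and your reasoning for the collider case---that an unconditioned collider with no conditioned descendant is pruned away when you pass to the ancestral set, hence contributes no moralizing edge---is exactly the right mechanism, though writing it out carefully for paths with multiple colliders takes a bit of bookkeeping.
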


IVs can also be defined using a graphical criterion \citep[][Definition 7.4.1]{pearl2009causality}. Consider a DAG $G$ with nodes $Z,X,Y$ (and possibly additional nodes, as in the DAGs in \ref{fig:dags_intro}). We follow \citet{pearl2009causality} by using $G_{\overline{X}}$ to denote the version of the DAG $G$ with all arrows entering $X$ removed.

\begin{dfn}
\label{dfn:IV_DAG}
    A variable $Z$ is an IV for treatment $X$ on outcome $Y$ if 
    \begin{enumerate}
        \item $(Z \indep Y)_{G_{\overline{X}}} $
        \item $(Z \cancel{\indep} X)_G$.
    \end{enumerate}
\end{dfn}

The first condition corresponds to the condition stated in \eqref{Eq:IVexogeneity}. It implies that there is no alternative path between the IV and the outcome other than through the treatment. The second condition corresponds to the IV relevance assumption. The negative control falsification tests for IVs we discuss in this paper focus on the first condition. 

\subsection{Negative Controls for IVs Using DAGs}

We are now ready to present the theory of negative controls for IVs using DAGs. To this end, we begin by providing the definitions of APO and API variables and those of NCOs and NCIs. Then, we provide proofs of Theorems \ref{Thm:NCOdep} and \ref{Thm:NCIdep} using DAG theory.

We start with the graphical DAG-based definition of an APO variable.
\begin{dfn}[Alternative path outcome variable\label{dfn:apo_var_dag}]
A random variable $U$ is an APO variable if the following two conditions hold.
 \begin{enumerate}
\item  \textit{Latent IV validity}. $(Z \indep Y | U)_{G_{\overline{X}}}$.
\item \textit{Path indication}. If $(Z \indep Y)_{G_{\overline{X}}}$ then $(Z \indep U)_G$.
\end{enumerate}
\end{dfn}
This definition is similar to the definition of APO variables using potential outcomes (Definition~\ref{dfn:apo_var1}). Latent IV validity implies that $Z$ and $Y$ are $d$-separated by $U$ (excluding the path through $X$). Path indication implies that if the IV is valid, then there is no unblocked path between $Z$ and $U$. 

Similarly, we can define graphically an API variable.
\begin{dfn}[Alternative path instrument variable\label{dfn:api_var_dag}]
A random variable $U$ is an API variable if the following two conditions hold.
 \begin{enumerate}
\item  \textit{Latent IV validity}. $(Z \indep Y | U)_{G_{\overline{X}}}$.
\item \textit{Path indication}. If $(Z \indep Y)_{G_{\overline{X}}}$ then $(U \indep Y | Z)_G$.
\end{enumerate}
\end{dfn}

Turning to negative control variables, 
the definitions of NCO and NCI are similar to the definitions using potential outcomes (Definitions~\ref{dfn:NCO_var} and \ref{dfn:NCI_var}). 
\begin{dfn}\label{dfn:NCO_var_DAG}
A random variable $NC$ is an NCO if there exists an APO variable $U$ such that the following two conditions hold.
\begin{enumerate}
\item The NCO assumption. 
 $(NC \indep Z |\ U)_G$.
 \item $U$-comparability. $(NC \cancel{\indep} U)_G$.
\end{enumerate}
\end{dfn}

\begin{dfn}\label{dfn:NCI_var_DAG}
A random variable $NC$ is an NCI if there exists an API variable $U$ such that the following two conditions hold.
\begin{enumerate}
\item The NCI assumption.  $(NC \indep Y |\ Z,U)_G$.
\item $U$-comparability. $(NC\cancel{\indep}U|Z)_G$. 
\end{enumerate}

\end{dfn}

We now provide a proof of Theorem \ref{Thm:NCOdep} under the above DAG definitions. Following the condition of Theorem \ref{Thm:NCOdep}, the NCO test finds that  $NC \cancel{\indep} Z$.  By Definition~\ref{dfn:NCO_var_DAG} we have that $(NC \indep Z |\ U)_G$. From the test we know that $NC \cancel{\indep} Z$, which implies $(NC \cancel{\indep} Z)_G$ by the contrapositive of Theorem~\ref{thm:DAG2CI}. Because $(NC \cancel{\indep} Z)_G$, there is at least one open (unblocked) path between $Z$ and $NC$. However, because $(NC \indep Z | U)_G$, this path or paths are blocked by $U$. By Definition \ref{dfn:d-sep}, this means that $U$ is either in the middle of a chain or a fork on the open path between $Z$ and $NC$. Thus, there is an unblocked path between $Z$ and $U$, i.e., $(Z \cancel{\indep} U)_G$ which, by path indication, implies that $(Z \cancel{\indep} Y)_{G_{\overline{X}}}$, violating the first IV condition in Definition~\ref{dfn:IV_DAG}. 

We turn to the proof of Theorem \ref{Thm:NCIdep} under the DAG definitions. Following the condition of Theorem~\ref{Thm:NCIdep},  the NCI test finds that  $NC \cancel{\indep} Y | Z$.  By Definition \ref{dfn:NCI_var_DAG} we have that $(NC \indep Y |\ Z, U)_G$. From the test, we know that $NC \cancel{\indep} Y|Z$, which implies $(NC \cancel{\indep} Y|Z)_G$ by the contrapositive of Theorem \ref{thm:DAG2CI}. Because $(NC \cancel{\indep} Y | Z)_G$, there is at least one open path between $NC$ and $Y$, which is not blocked by $Z$. However, because $(NC \indep Y | U,Z)_G$, this path or paths are blocked by $U$. By Definition \ref{dfn:d-sep}, this means that $U$ is either in the middle of a chain or a fork on the open path between $Y$ and $NC$. Thus, there is an unblocked path between $U$ and $Y$ not blocked by $Z$, i.e., $(U \cancel{\indep} Y|Z)_G$ which by path indication implies that $(Z \cancel{\indep} Y)_{G_{\overline{X}}}$, violating the first IV condition in Definition \ref{dfn:IV_DAG}.

\endQuotable{}

\section{Additional Theory and Proofs}
\label{AppSec:proofs}

Throughout, we let $P(\cdot | \cdot)$ be the conditional probability or density function. As a shorthand, we leave the random variables to be understood from the arguments of $P$. 
For example, if $Y(x)$ is discrete, $P[y(x)| u]$ is a shorthand for $\Pr[Y(x)=y(x)| U=u]$.

 \subsection*{Auxiliary Lemmas}
\begin{lem}[Lemma 4.3 in \cite{dawid1979conditional}]
\label{Lem:indep1}
Let $A,B,D,Q$ be four random variables. If $A\indep B| D,Q$ and $B\indep Q| D$ then $A\indep B | D$.
\end{lem}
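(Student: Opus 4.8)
The plan is to argue directly at the level of the conditional probability/density functions $P(\cdot\mid\cdot)$ introduced above, so that a single argument covers the discrete and continuous cases at once (replacing integrals by sums when the variables are discrete). The goal is to establish $A\indep B\mid D$ in the equivalent form $P(a\mid b,d)=P(a\mid d)$, and the whole proof is a one-line marginalization over $Q$ followed by two substitutions.

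First I would expand the left-hand side by conditioning on $Q$ and integrating it out:
\[
P(a\mid b,d)=\int P(a\mid b,d,q)\,P(q\mid b,d)\,dq.
\]
Next I would rewrite each factor using the two hypotheses. The assumption $A\indep B\mid D,Q$ gives $P(a\mid b,d,q)=P(a\mid d,q)$, removing $b$ from the first factor. The assumption $B\indep Q\mid D$, read (by symmetry and the definition of conditional independence) as $P(q\mid b,d)=P(q\mid d)$, removes $b$ from the second factor.

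Substituting both into the display leaves $\int P(a\mid d,q)\,P(q\mid d)\,dq$, which is precisely the law of total probability for $A$ given $D$ and therefore equals $P(a\mid d)$. Hence $P(a\mid b,d)=P(a\mid d)$, i.e.\ $A\indep B\mid D$, as claimed. Equivalently, one may view the result abstractly as an instance of the contraction and decomposition properties of conditional independence: $B\indep Q\mid D$ together with $B\indep A\mid D,Q$ yields $B\indep(A,Q)\mid D$, whence $B\indep A\mid D$.

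There is no substantive obstacle here; the proof is essentially mechanical. The only points demanding care are bookkeeping ones: stating each independence hypothesis in the exact density form that is actually used---in particular re-expressing $B\indep Q\mid D$ as an equality of conditional densities of $Q$---and noting the routine caveat that all the displayed equalities hold up to null sets, so that the conclusion is an almost-sure statement.
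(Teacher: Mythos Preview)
Your proof is correct and follows essentially the same approach as the paper: both arguments marginalize over $Q$ and use the two hypotheses to simplify the resulting factors. The only cosmetic difference is that the paper works with the joint $P(a,b\mid d,q)=P(a\mid d,q)P(b\mid d,q)=P(a\mid d,q)P(b\mid d)$ and then integrates to obtain $P(a,b\mid d)=P(a\mid d)P(b\mid d)$, whereas you work with the conditional $P(a\mid b,d)$ directly; the two computations are equivalent.
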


\begin{proof}
Because $A \indep B | D,Q$, it follows that for all $a,b,d,q$, we have that
\begin{align}
\begin{split}
\label{AppEq:lem1aux}P(a,b|d,q)&=P(a|d,q)P(b|d,q)\\
&=P(a|d,q)P(b|d),
\end{split}
\end{align}
where the last line follows from $B\indep Q| D$.
Now,
\begin{equation*}
P(a,b|d)=\int P(a,b|d,q)P(q|d)dq = \left[\int P(a|d,q)P(q|d)dq\right]P(b|d)=P(a|d)P(b|d),
\end{equation*}
where the second equality is by \eqref{AppEq:lem1aux}.
\end{proof}
This lemma is also known as the contraction axiom of conditional independence \citep{pearl1986graphoids}. The following lemma is a direct result of Lemma  \ref{Lem:indep1}.

\begin{lem}
\label{Lem:indep2}
Let $A,B,D,Q$ be four random variables. If $A\indep B|D,Q$ and $A\cancel{\indep}B|D$ then $A\cancel{\indep}Q|D$
and $B\cancel{\indep}Q|D$.
\end{lem}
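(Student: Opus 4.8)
The plan is to prove both non-independence claims by contraposition, leaning directly on the contraction axiom established in Lemma~\ref{Lem:indep1}. Since the hypotheses are $A\indep B\mid D,Q$ together with $A\cancel{\indep}B\mid D$, the natural move is to suppose the negation of each conclusion and derive the forbidden statement $A\indep B\mid D$.

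First I would establish $B\cancel{\indep}Q\mid D$. Suppose toward a contradiction that $B\indep Q\mid D$. Then I have exactly the two premises of Lemma~\ref{Lem:indep1} (namely $A\indep B\mid D,Q$ and $B\indep Q\mid D$), so that lemma yields $A\indep B\mid D$. This contradicts the standing hypothesis $A\cancel{\indep}B\mid D$, so the supposition fails and $B\cancel{\indep}Q\mid D$ holds.

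Next I would establish $A\cancel{\indep}Q\mid D$ by the symmetric argument, using that conditional independence is symmetric in its first two arguments: $A\indep B\mid D,Q$ is the same statement as $B\indep A\mid D,Q$. Suppose $A\indep Q\mid D$. Applying Lemma~\ref{Lem:indep1} with the roles of $A$ and $B$ interchanged---i.e.\ with premises $B\indep A\mid D,Q$ and $A\indep Q\mid D$---gives $B\indep A\mid D$, equivalently $A\indep B\mid D$, again contradicting $A\cancel{\indep}B\mid D$. Hence $A\cancel{\indep}Q\mid D$.

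I do not anticipate a genuine obstacle here: the result is, as the text says, an immediate corollary of Lemma~\ref{Lem:indep1}. The only point requiring a word of care is the appeal to the symmetry of conditional independence when deriving the $A$-side claim, since Lemma~\ref{Lem:indep1} is stated with a fixed ordering of the variables; once one observes that $A\indep B\mid D,Q \Longleftrightarrow B\indep A\mid D,Q$, the second claim follows by relabeling rather than by any new computation.
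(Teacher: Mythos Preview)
Your proof is correct and mirrors the paper's own argument exactly: both conclusions are obtained by assuming the corresponding conditional independence, invoking Lemma~\ref{Lem:indep1} (for the second claim with the roles of $A$ and $B$ swapped via symmetry), and contradicting $A\cancel{\indep}B\mid D$. There is nothing to add.
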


\begin{proof}
Assume by way of contradiction that $B\indep Q|D$. Therefore, by Lemma \ref{Lem:indep1}, because  $A\indep B| D,Q$ it follows that $A\indep B| D$,
which contradicts the assumption.  A similar contradiction is received by assuming $A\indep Q| D$. 
\end{proof}

\subsection{Negative Controls for IV Designs Under General Definitions}

This section presents the proofs of the theoretical results from Section \ref{Sec:theory}. We prove versions of the results that are more general in three different ways. First, we discuss IV designs that include control variables. Second, we provide more general definitions for APO and API variables that accommodate multiple threats of which Definitions \ref{dfn:apo_var1} and \ref{dfn:api_var1} are special cases. Third, we provide more general definitions of NCO and NCI  (under weaker NCO and NCI assumptions, respectively). 

We start by presenting the outcome independence and exclusion restriction assumptions when controls are included.

\begin{ass}[Outcome independence]
\label{Ass:IndepWithControls}
$Z \indep Y(z,x)|C$ for all possible $z,x$ values.
\end{ass} 
\begin{ass}[Exclusion restriction]
\label{Ass:ExclRestrictionWithControls}
$\Pr(Y(z,x)= Y(z',x)=Y(x)|C=c)=1$ for all possible $z,z',x,c$ values.
\end{ass}
Similar to the case without controls, outcome independence and exclusion restriction together yield $Z\indep Y(x)|\ C$.

\subsubsection{Alternative Path Variables with Multiple Threats and Controls}
\label{AppSubSubSec:NCOproofs}
In some applications, multiple potential alternative paths can exist between the IV and the outcome. \ref{appSec:multiple_APO_Var} presents an example of two distinct variables that affect the outcome and could potentially affect the IV as well and thus may violate outcome independence. To accommodate the possibility of multiple violations of the IV assumptions, we extend Definition~\ref{dfn:apo_var1} and \ref{dfn:api_var1}. We introduce a random variable $V$, which represents other potential threats in addition to the threat posed by the alternative path variable $U$. We also include control variables $C$ to accommodate cases where the IV design is assumed to be valid only when controls are included. 

The more general definition for APO variables reads as follows. 
\begin{dfn}[Alternative path outcome variable with multiple threats and controls]\label{dfn:app-apo}
A random variable $U$ is an APO variable conditional on a set of controls $C$ if there exists a random variable $V$ such that the following conditions hold.
\begin{enumerate}
\item  \textit{Latent IV validity}. $Z \indep Y(x) | C, U, V.$ 
 \item \textit{Path indication.} If $Z \indep Y(x)| C,V$ then $Z \indep U | C, V$.
\item \textit{Direct IV link}. If $Z\indep U | C,V$ then $Z \indep U | C$.
\item \textit{V-validity}. If  $Z{\indep}Y(x)|C$  then $Z{\indep}  Y(x) | C, V$.
\end{enumerate}
\end{dfn}

\noindent Under this definition, latent IV validity states that the IV is valid conditionally not only on the APO variable $U$, but also on the additional threat(s) $V$, and the controls $C$.

In contrast to Definition~\ref{dfn:apo_var1}, this more general version of latent IV validity also holds for $U$ even if $V$ is the actual threat to the identification and $U$ is only an imperfect proxy for it. Therefore, to maintain the same interpretation of an APO variable as posing a threat to IV validity, we replace the condition of path indication from Definition~\ref{dfn:apo_var1} and include two additional conditions. Together, Conditions 2--4 of Definition \ref{dfn:app-apo} yield Condition 2 of Definition \ref{dfn:apo_var1} (conditional on the controls $C$). However, the three separate conditions ensure that the APO variable is part of the threat itself and not a proxy. Specifically, path indication and direct IV link each rule out a different type of proxy; see \ref{AppSubSec:VioloationProxies} and \ref{AppSubSec:VioloationProxiesPathIndication} for counterexamples.
The final property, V-validity, is a more technical requirement for the variable $V$ that ensures $V$ represents other threats. It states that an  IV design that satisfies outcome independence and exclusion restriction conditional on the controls remains valid conditional on $V$. See \ref{AppSubSec:ViolationVvalidityColider} for a counterexample.  
If there are no additional threats other than through $U$, and no control variables, Definition \ref{dfn:app-apo} is equivalent to Definition \ref{dfn:apo_var1}.

We similarly extend Definition \ref{dfn:api_var1} to allow for additional threats and to include controls $C$.

\begin{dfn}[Alternative path instrument variable with multiple threats and controls\label{dfn:api_var2}]
A random variable $U$ is an API variable conditional on a set of controls $C$ if there exists a random variable $V$ such that the following conditions hold.
 \begin{enumerate}
\item  \textit{Latent IV validity}. $Z \indep Y(x) | C,U,V$.
\item \textit{Path indication}. If $Z \indep Y(x)|C,V$ then $U \indep Y|Z, C,V$.
\item \textit{Direct outcome link}. If  $U{\indep}Y|Z, C,V$  then $U{\indep}  Y | Z, C$.
\item \textit{V-validity}. If  $Z{\indep}Y(x)|C$  then $Z{\indep}  Y(x) | C,V$.
\end{enumerate}
\end{dfn}
Condition 1 is the same as in Definition \ref{dfn:app-apo}. Conditions 2--4 together imply a version of Condition 2 from Definition \ref{dfn:api_var1} that includes controls. Similar to the theory of APO variables, the condition is decomposed into three independent conditions to exclude proxies that are not themselves part of an alternative path and maintain that $V$ is indeed a part of such a threat (similar to Definition \ref{dfn:app-apo}).

\subsubsection{General Definition of Negative Control Variables}

We adapt the definition of NCOs as follows.

\begin{dfn}[Negative control outcome with controls]
\label{dfn:NCO_var_controls}
A random variable $NC$ is an NCO if there exists an APO variable $U$ such that the following two conditions hold.
\begin{enumerate}
    \item The NCO assumption.   If $Z\indep U|C$ then $NC \indep Z |\ U,C.$
 \item $U$-comparability. $NC \cancel{\indep} U|C$.
\end{enumerate}
\end{dfn} 

Even without controls ($C=\emptyset$), this definition is more general than Definition \ref{dfn:NCO_var}, as it allows for NCOs that were previously excluded.  To see this, note that in the case without controls, if $Z\cancel{\indep} U$  (and so the design is invalid), the NCO assumption in Definition \ref{dfn:NCO_var_controls} allows for an association $NC \cancel{\indep} Z |\ U$. In this case, the NCO would still be informative about the validity of the IV design since the association between the NCO and the IV, given the APO variable exists only if the design is invalid. \ref{AppSubSec:NCOaffectZ} provides an example of such an NCO. 
Every variable that satisfies the NCO assumption in Definition \ref{dfn:NCO_var} trivially satisfies this less restrictive definition.

Next, we generalize the definition of an NCI to include controls and allow for direct associations with the outcome if the IV design is invalid.

\begin{dfn}[Negative control instrument with controls]
\label{dfn:NCI_var_with_controls}
A random variable $NC$ is an NCI if there exists an API variable $U$ such that the following two conditions hold.

\begin{enumerate}
    \item The NC assumption.   If $U \indep Y|Z,C$ then $NC \indep Y|Z,C,U.$
  \label{nci_assumption_controls}  
 \item $U$-comparability. $NC \cancel{\indep} U|Z,C$.
\end{enumerate}
\end{dfn}

\subsubsection{Negative Control Tests with Controls}

We are now ready to state the more general version of Theorem \ref{Thm:NCOdep} and present its proof. This theorem also covers the case without controls by letting $C$ be degenerate.

\beginQuotable[ThmNCOControls]
\begin{thm}\label{Thm:NCOdepControls}
Assume that a random variable $NC$ is an NCO with respect to controls $C$ (Definition~\ref{dfn:NCO_var_controls}). If $NC\cancel{\indep}Z|C$, then either outcome independence or exclusion restriction is violated. That is, the IV design is invalid.
\end{thm}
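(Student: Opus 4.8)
The plan is to prove the contrapositive: assuming the IV design is valid---so that outcome independence and exclusion restriction jointly hold and therefore $Z \indep Y(x)\mid C$ (as noted after Assumption~\ref{Ass:ExclRestrictionWithControls})---I would show that $NC \indep Z\mid C$. This immediately yields the theorem, because then $NC \cancel{\indep} Z\mid C$ forces $Z \cancel{\indep} Y(x)\mid C$, and since outcome independence and exclusion restriction together imply $Z \indep Y(x)\mid C$, at least one of the two must fail.

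First I would push the assumed validity $Z \indep Y(x)\mid C$ through the three structural conditions of the general APO definition (Definition~\ref{dfn:app-apo}) in order. V-validity (Condition~4) upgrades $Z \indep Y(x)\mid C$ to $Z \indep Y(x)\mid C,V$; path indication (Condition~2) then delivers $Z \indep U\mid C,V$; and the direct IV link (Condition~3) strips away the auxiliary variable to give $Z \indep U\mid C$. The role of the extra threat variable $V$ here is purely as intermediate bookkeeping: it is introduced by V-validity and eliminated by the direct IV link, so it does not appear in the final conclusion. Note that latent IV validity (Condition~1) and $U$-comparability are not needed for this direction; the former fixes the interpretation of $U$ and the latter governs power, not validity.

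Next I would invoke the NCO assumption (Condition~1 of Definition~\ref{dfn:NCO_var_controls}). Having just established $Z \indep U\mid C$, its conclusion $NC \indep Z\mid U,C$ is triggered. At this point I hold two conditional independencies, $NC \indep Z\mid U,C$ and $Z \indep U\mid C$, and the last step is an application of the contraction axiom, Lemma~\ref{Lem:indep1}, under the identifications $A=NC$, $B=Z$, $D=C$, $Q=U$. Its two hypotheses are exactly the statements in hand, so the lemma concludes $NC \indep Z\mid C$, completing the contrapositive.

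There is no genuine analytic obstacle here; the entire content lies in checking that the three-part decomposition of path indication in the generalized APO definition is arranged so that the chain of implications closes, and in matching the variable roles correctly when applying the lemma. The one subtlety worth flagging is the order of operations: V-validity must be applied \emph{before} path indication, so that $V$ is already in the conditioning set when path indication is used, and the direct IV link must come last to recover a conclusion conditioned only on $C$.
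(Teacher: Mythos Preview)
Your proposal is correct and essentially matches the paper's proof. The paper argues in the direct direction—starting from $NC\cancel{\indep}Z\mid C$, obtaining $Z\cancel{\indep}U\mid C$ via Lemma~\ref{Lem:indep2}, and then applying the contrapositives of direct IV link, path indication, and V-validity in that order—whereas you run the full contrapositive and use Lemma~\ref{Lem:indep1} instead; the logical content and the chain of conditions invoked are identical.
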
 
\endQuotable{}
\begin{proof}
We begin by showing that $NC\cancel{\indep}Z|C$ implies that $Z\cancel{\indep}U|C$. Else, if $Z\indep U|C$ then by the NCO assumption (see Definition~\ref{dfn:NCO_var_controls}) $NC \indep Z |\ U,C$. Based on Lemma \ref{Lem:indep2}, $NC \indep Z |\ U,C$ and $NC\cancel{\indep}Z|C$ imply that $Z\cancel{\indep}U|C$, a contradiction. 

Next, from direct IV link if follows that $Z\cancel{\indep}U|C$ implies $Z\cancel{\indep}U|V,C$. Then, by path indication, we get that $Z\cancel{\indep}Y(x)|V,C$. Finally, by V-validity, we have that $Z\cancel{\indep}Y(x)|C$.

However, outcome independence (Assumption \ref{Ass:IndepWithControls}) and exclusion restriction (Assumption \ref{Ass:ExclRestrictionWithControls}) together imply that $Z\cancel{\indep}Y(x)|C$. Therefore, one of these assumptions must be violated. 
\end{proof}

Next, we state the more general version of Theorem \ref{Thm:NCIdep} and present its proof. This theorem also covers the case without controls by letting $C$ be degenerate.

\begin{thm}
\label{Thm:NCIdepControls}
Assume that a random variable $NC$ is an NCI with respect to controls $C$  (Definition~\ref{dfn:NCI_var_with_controls}). If $NC\cancel{\indep}Y|Z,C$, then either outcome independence or exclusion restriction is violated. That is, the IV design is invalid.
\end{thm}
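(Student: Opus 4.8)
The plan is to mirror, step for step, the proof of Theorem~\ref{Thm:NCOdepControls} given just above, propagating the observed dependence $NC \cancel{\indep} Y | Z, C$ through the conditions defining an NCI (Definition~\ref{dfn:NCI_var_with_controls}) and an API variable (Definition~\ref{dfn:api_var2}) until it forces $Z \cancel{\indep} Y(x) | C$, which contradicts the conjunction of outcome independence and exclusion restriction. The only genuine difference from the NCO argument is that the relevant conditional (in)dependence is now between $NC$ and the outcome $Y$, additionally conditioning on $Z$, rather than between $NC$ and the instrument; consequently every statement carries an extra $Z$ in its conditioning set.

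First I would show that $NC \cancel{\indep} Y | Z, C$ forces $U \cancel{\indep} Y | Z, C$ for the associated API variable $U$. Arguing by contradiction, suppose $U \indep Y | Z, C$. Then the NCI assumption of Definition~\ref{dfn:NCI_var_with_controls} yields $NC \indep Y | Z, C, U$. I would then invoke Lemma~\ref{Lem:indep2} with $A = NC$, $B = Y$, $D = (Z, C)$, and $Q = U$: combining $NC \indep Y | Z, C, U$ with the hypothesis $NC \cancel{\indep} Y | Z, C$ gives, among its conclusions, $Y \cancel{\indep} U | Z, C$, i.e. $U \cancel{\indep} Y | Z, C$. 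This contradicts the supposition, so $U \cancel{\indep} Y | Z, C$ must hold. This is exactly the analogue of the opening step of the NCO proof, where $Z \cancel{\indep} U | C$ was derived from $NC \cancel{\indep} Z | C$.

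Next I would push this dependence back to the latent potential outcome by taking contrapositives of the three structural conditions in Definition~\ref{dfn:api_var2}, in order. Direct outcome link (Condition~3) turns $U \cancel{\indep} Y | Z, C$ into $U \cancel{\indep} Y | Z, C, V$; path indication (Condition~2) then gives $Z \cancel{\indep} Y(x) | C, V$; and V-validity (Condition~4) finally yields $Z \cancel{\indep} Y(x) | C$. But outcome independence (Assumption~\ref{Ass:IndepWithControls}) together with exclusion restriction (Assumption~\ref{Ass:ExclRestrictionWithControls}) imply $Z \indep Y(x) | C$, so at least one of the two assumptions must fail, establishing the claim.

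The main obstacle I anticipate is bookkeeping rather than conceptual: keeping the conditioning sets aligned so that Lemma~\ref{Lem:indep2} applies cleanly, in particular verifying that the lemma's role of $D$ is played by the pair $(Z,C)$ and that each contrapositive is drawn from the correctly conditioned version of the API conditions. One should also confirm that the NCI assumption, stated only as the implication ``if $U \indep Y | Z, C$ then $NC \indep Y | Z, C, U$,'' is exactly what the contradiction step consumes, and that $U$-comparability plays no role in this direction of the argument, just as it did not in the NCO proof.
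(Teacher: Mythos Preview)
Your proposal is correct and matches the paper's proof essentially line for line: the paper likewise uses Lemma~\ref{Lem:indep2} with $D=\{Z,C\}$, $Q=U$ to obtain $U\cancel{\indep}Y|Z,C$, then applies direct outcome link, path indication, and $V$-validity in that order. The only cosmetic difference is that the paper frames the argument as two cases (either $U\cancel{\indep}Y|Z,C$ or $U\indep Y|Z,C$, the latter being shown impossible), whereas you present the same contradiction as a single opening step before the chain of contrapositives.
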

\begin{proof}
We divide the proof into two cases with respect to the API variable $U$ for which the NCI assumption holds for NC.

First, assume that $U \cancel{{\indep}} Y|Z,C$. In this case, from direct outcome link it follows that $U \cancel{\indep} Y|Z, C,V$, and therefore by path indication, $Z \cancel{\indep} Y(x)|C,V$. Therefore, by $V$-validity, we have that $Z \cancel{\indep} Y(x) |C$. Hence, either outcome independence or the exclusion restriction do not hold. 

We now turn to the other case, where  $U {\indep} Y|Z,C$. 
By the NCI assumption (Definition \ref{dfn:NCI_var_with_controls}), we have that $NC{\indep} Y|Z,C,U$, and by the condition of the theorem, we have that $NC \cancel{\indep} Y | Z,C$. Therefore, by  Lemma \ref{Lem:indep2} (with $D= \{Z,C\}, Q=U$), we have that  $U \cancel{\indep} Y|Z,C$, which contradicts the assumption $U {\indep} Y|Z,C$. 

\end{proof}

We now turn to state and prove a version of Theorem \ref{thm:NCIdepUnconditional}, conditional on controls $C$.

\begin{thm}
\label{Thm:NCIdepUnconditionalControls}
Assume that a random variable $NC$ is an NCI with respect to controls $C$ (Definition~\ref{dfn:NCI_var_with_controls}). If, in addition, $NC\indep Z|C$, then if  $NC\cancel{\indep}Y|C$, then $Z\cancel{\indep}Y(x)|C$. 

\end{thm}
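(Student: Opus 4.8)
The plan is to reduce this unconditional statement to the already-proved conditional result, Theorem~\ref{Thm:NCIdepControls}. That theorem delivers $Z \cancel{\indep} Y(x)|C$ as soon as one knows $NC \cancel{\indep} Y|Z,C$ for an NCI, but here I am handed only the \emph{unconditional} dependence $NC \cancel{\indep} Y|C$. The entire content of the proof is therefore to show that, under the extra hypothesis $NC \indep Z|C$, the unconditional dependence upgrades to the conditional one $NC \cancel{\indep} Y|Z,C$; the conclusion is then inherited verbatim from Theorem~\ref{Thm:NCIdepControls}.

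To carry out that upgrade I would argue by contradiction. Suppose $NC \indep Y|Z,C$. I would then invoke Lemma~\ref{Lem:indep1} (the contraction axiom) with the assignment $A=Y$, $B=NC$, $D=C$, $Q=Z$: its two premises, $A \indep B|D,Q$ and $B \indep Q|D$, read exactly $NC \indep Y|Z,C$ (the contradiction hypothesis) and $NC \indep Z|C$ (the additional hypothesis of the theorem). The lemma's conclusion $A \indep B|D$ then reads $NC \indep Y|C$, contradicting the standing premise $NC \cancel{\indep} Y|C$. Hence $NC \cancel{\indep} Y|Z,C$. Since $NC$ is by assumption an NCI with respect to $C$ (Definition~\ref{dfn:NCI_var_with_controls}), applying Theorem~\ref{Thm:NCIdepControls} now yields $Z \cancel{\indep} Y(x)|C$, which is the desired conclusion.

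I do not anticipate a genuine obstacle: the single idea is to recognize that the added hypothesis $NC \indep Z|C$ is precisely the second premise the contraction axiom requires to trade an unconditional (in)dependence for a conditional one. The only points needing care are bookkeeping---correctly matching the role of $Z$ (the variable moved into the conditioning set) against $C$ (which remains in the conditioning set throughout) when applying Lemma~\ref{Lem:indep1}---and verifying that letting $C$ be degenerate recovers the main-text statement, Theorem~\ref{thm:NCIdepUnconditional}.
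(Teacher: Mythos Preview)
Your proposal is correct and is essentially the same argument as the paper's: both combine Theorem~\ref{Thm:NCIdepControls} with Lemma~\ref{Lem:indep1} applied to $(A,B,D,Q)=(Y,NC,C,Z)$ and the hypothesis $NC\indep Z|C$. The only cosmetic difference is that the paper assumes $Z\indep Y(x)|C$ for contradiction and then invokes the contrapositive of Theorem~\ref{Thm:NCIdepControls} to obtain $NC\indep Y|Z,C$ before applying the lemma, whereas you first establish $NC\cancel{\indep}Y|Z,C$ and then apply Theorem~\ref{Thm:NCIdepControls} directly.
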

\begin{proof}
Assume by way of contradiction that $Z \indep Y(x)| C$ holds. Because  $NC$ is an NCI, it follows from Theorem \ref{Thm:NCIdepControls} that $NC \indep Y|Z,C$. Additionally, based on the assumption that $NC \indep Z|C$, Lemma \ref{Lem:indep1} implies that $NC \indep Y|C$, which contradicts the premise. 
\end{proof}

\subsubsection{Control Variables and Functional Form} \label{AppSubSec:corollaries}
\textbf{Proof of Corollary \ref{Coro:NCO_RC}}
\begin{proof}

The  minimized expression  can be written as  
$$
\bbE[Z-b'_CC - b_{NC}NC]^2 = \bbE\big[Z-\bbE[Z|C,NC]\big]^2 +\bbE\big[\bbE[Z|C,NC]-b'_CC - b_{NC}NC\big]^2
$$ 
plus a term equaling zero because $\bbE[Z-\bbE[Z|C,NC]]$  is zero by the law of total expectation.
Since $\bbE[Z-\bbE[Z|C,NC]]^2$ does not depend on $b_C,b_{NC}$, we can write 
$$
\gamma = \argmin_{b_c,b_{NC}}\bbE\big[\bbE[Z|C,NC]-b'_CC - b_{NC}NC\big]^2.
$$
If outcome independence, exclusion restriction, and rich covariates are assumed, \eqref{Eq:NCnull2SLS} holds and $\bbE[Z|C,NC]=\gamma'_CC$. Hence, we can further write 
$$\gamma = \argmin_{b_c,b_{NC}}\bbE[\gamma'_CC-b'_CC - b_{NC}NC]^2.$$
The values that minimize this  nonnegative expression are $b'_C = \gamma'_C$ and $b_{NC} = 0$ and so the OLS population-level coefficient is $\gamma' = (\gamma'_C, 0).$ If $\gamma_{NC} \neq 0$, it must be that (\ref{Eq:NCnull2SLS}) does not hold. Therefore, either outcome independence, exclusion restriction, or rich covariates is violated.

\end{proof}

\textbf{Proof of Corollary \ref{Coro:pseudo-outcome}}
\begin{proof}
Let $\widetilde{Z}=Z-C'\bbE[C C']^{-1}\bbE[CZ]$ and $\widetilde{NC}=NC-C'\bbE[C C']^{-1}\bbE[CNC]$ be the residuals from the linear regressions of $Z$ and $NC$ on $C$, respectively. By the Frisch–Waugh–Lovell theorem, we can write 
$\beta_Z$ as 
\begin{equation*}
        \beta_Z = \frac{COV(\widetilde{Z},\widetilde{NC})}{Var(\widetilde{Z})}.
\end{equation*}
If $\beta_Z \neq 0$ then it must be that $COV(\widetilde{Z}, \widetilde{NC})  \neq 0$. 
Define $(\gamma'_C,\gamma_{NC})$ to be the population-level solution of the reverse OLS, with $Z$ as the dependent variable (as in Corollary \ref{Coro:NCO_RC}).
Again, by the Frisch–Waugh–Lovell theorem, we can write $\gamma_{NC}$ as
\begin{equation*}
\gamma_{NC} =\frac{COV(\widetilde{Z},\widetilde{NC})}{Var(\widetilde{NC})}.
\end{equation*}
Since $COV(\widetilde{Z}, \widetilde{NC})  \neq 0$ it follows that $\gamma_{NC} \neq 0$ as well. 
By Corollary~\ref{Coro:NCO_RC}, we have that either outcome independence, exclusion restriction, or rich covariates does not hold.

\end{proof}

\textbf{Proof of Corollary \ref{Coro:pseudo-IV}}
\begin{proof}
Similar to the proof of Corollary  \ref{Coro:NCO_RC}, we write the equivalent minimization problem as 
$$
\theta =  \argmin_{b_Z,b_c,b_{NC}}\bbE\big[\bbE[Y|Z,C,NC]-b_ZZ-b'_CC - b_{NC}NC\big]^2.
$$ 
If outcome independence, exclusion restriction, and CSRF are assumed,  \eqref{Eq:NCInullCSRF} holds and 
$$\theta =  \argmin_{b_Z,b_c,b_{NC}}\bbE[\theta_ZZ + \theta'_CC-b_ZZ-b'_CC - b_{NC}NC]^2.$$ 
The values that minimize this expression are $b_Z = \theta_Z, b'_C = \theta'_C$ and $b_{NC} = 0$ and so the OLS population-level coefficient is $\theta' = (\theta_Z,\theta'_C, 0).$ If $\theta_{NC} \neq 0$, it must be that (\ref{Eq:NCInullCSRF}) does not hold. Therefore either outcome independence, exclusion restriction, or CSRF is violated.

\end{proof}

\section{Examples and Counterexamples}
\label{AppSec:Examples}

\subsection{Non-Causal APO Variable}
\label{AppSubSec:NonCausalAPV}

 \begin{figure}[htbp!]
 \begin{center}
 	\begin{tikzpicture}
  	\node (1) {$U_1$};
 	\node [right =   1.5cm of 1](2) {$U_2$};
 	\node [above =   1cm of 2](3) {$NC$};
    \node [right =   2cm of 2](4) {$Z$};
    \node [right =  1.5cm of 4](5) {$X$};
    \node [right =  1.5cm of 5](6) {$Y$};
    \node [above left =  1cm of 5](7) {$W$};
    \draw[Arrow] (1) -- (2);
    \draw[Arrow] (2) -- (3);
    \draw[DashedRedArrow] (2) -- (4);
    \draw[Arrow] (4) -- (5);
    \draw[Arrow] (5) -- (6);
    \draw[Arrow] (7) -- (5);
    \draw[Arrow] (7) -- (6);
    \draw [Arrow] (1) to [bend right=25] (6);
 	\end{tikzpicture}
 	 \end{center}
\caption{An illustration of causal and non-causal APO variables}
\label{AppFig:NonCausalAPV}
 \end{figure}

In the example presented in \ref{AppFig:NonCausalAPV}, $U_2$ is a valid APO variable satisfying path indication without having a direct causal effect on $Y$ (beyond possible effect through the IV). Consider a scenario where  $Z$ represents supposedly quasi-random teacher assignment, $X$ is the value added of the actual teacher, and $Y$ is student test scores. The variable $W$ represents some unobserved student characteristic (e.g., their parents' involvement), which affects test score and also correlates with teacher assignments, as some students switch classrooms after the original assignment. 

In this example, $U_1$ denotes unobserved student ability that directly affects test scores, while $U_2$ represents detailed unobserved previous exam scores. The dashed arrow indicates the possibility that principals may assign students to teachers based on these detailed scores (e.g., students with low math scores are assigned to specific teachers). The detailed past exam scores satisfy path indication despite not directly affecting future test scores. A path exists between previous detailed scores ($U_2$) and current scores ($Y$) because both are affected by ability ($U_1$). 

The variable $NC$ represents aggregated previous test scores (e.g., average past scores in math together with other subjects). In this setting, $NC$ is an NCO, with $U_2$ as an APO variable. An association between the IV and the aggregated lagged test scores would indicate an alternative path from the IV to the outcome. 
The presence of this path violates outcome independence, as students with different abilities would sort into different teachers based on previous math scores. 

Note that while $U_1$ is also an APO variable, $NC$ is a valid NCO only with respect to $U_2$, not with respect to $U_1$ alone. This is because, conditional on $U_1$, there is still a correlation between the NCO and the IV ($NC \cancel{\indep} Z | U_1$). That is, teacher assignment is not conditionally independent of aggregated test scores.

\subsection{Heterogeneity-Based Violation of Path Indication}
\label{AppSubSec:ViolationPathHetero}

\begin{figure}[htb!]
\begin{center}
     \begin{subfigure}[b]{0.49\textwidth}
	\begin{tikzpicture}
	\node (1) {$W$};
	\node [above left = 1cm of 1](2) {$Z$};
	\node [right =   1.5cm of 2](3) {$X$};
	\node [right =   2cm of 3](4) {$Y$};
	\node [above right =  1.5cm of 2](5) {$U$};
	\node [right =  1cm of 5](6) {$N$};
	\draw[Arrow] (1) -- (3);
	\draw[Arrow] (1) -- (4);
	\draw[Arrow] (2) -- (3);
	\draw[Arrow] (3) -- (4);
	\draw[Arrow] (5) -- (6);
    \draw[Arrow] (5) -- (4);
	\end{tikzpicture}
	        \caption{Females}
	\end{subfigure}
	 \begin{subfigure}[b]{0.49\textwidth}
	\begin{tikzpicture}
	\node (1) {$W$};
	\node [above left = 1cm of 1](2) {$Z$};
	\node [right =   1.5cm of 2](3) {$X$};
	\node [right =   2cm of 3](4) {$Y$};
	\node [above right =  1.5cm of 2](5) {$U$};
	\node [right =  1cm of 5](6) {$N$};
	\draw[Arrow] (1) -- (3);
	\draw[Arrow] (1) -- (4);
	\draw[Arrow] (2) -- (3);
	\draw[Arrow] (3) -- (4);
	\draw[Arrow] (5) -- (6);
    \draw[Arrow] (2) -- (5);
	\end{tikzpicture}
	        \caption{Males}
	\end{subfigure}
	\end{center}
	\caption{Violation of path indication (Definition \ref{dfn:apo_var1}) due to heterogeneity}
	\label{Fig:ViolationVvalid}
\end{figure}

\ref{Fig:ViolationVvalid} describes a random variable $U$ associated with both the IV and the potential outcome $Y(x)$. However, $U$ does not qualify as an APO variable because it does not satisfy path indication. The IV affects $U$ for males but not for females, while $U$ affects $Y$ for females but not for males. Path indication is not satisfied because even though the IV satisfied outcome independence and exclusion restriction assumptions ($Z \indep Y(x)$), it remains associated with $U$ ($Z \cancel{\indep} U$). 

The random variable $N$ is a proxy for $U$. With sufficient sample size, we would detect $N \cancel{\indep} Z$ (due to the effect among males), but this test result does not indicate that the IV is invalid since $U$ is not an APO variable. 

\subsection{Violation of Path Indication: Multivariate Variable}
\label{AppSubSec:ViolationPathMultiDim}

 \begin{figure}[ht!]
 \begin{center}
 	\begin{tikzpicture}
  	\node (1) {$U_1$};
 	\node [below =   0.2cm of 1](2) {$U_2$};
\node [above right =   0.75cm of 1](3) {$N$};
\node [right =   2cm of 1](4) {$Z$};
\node [right =  1.5cm of 4](5) {$X$};
\node [right =  1.5cm of 5](6) {$Y$};
\node [above left =  1cm of 5](7) {$W$};
\node [above = 0.5cm of 1, color = blue] (8) {$U$};
\draw[Arrow] (1) -- (4);
\draw[Arrow] (1) -- (3);
\draw[Arrow] (4) -- (5);
\draw[Arrow] (5) -- (6);
\draw[Arrow] (7) -- (5);
\draw[Arrow] (7) -- (6);
\draw[DashedRedArrow] (2) -- (6);    \draw[line width=0.5mm, color= blue] (0,-0.5) circle [x radius=1, y radius=0.5, rotate=90];
 	\end{tikzpicture}
 	 \end{center}
\caption{Violation of path indication (Definition \ref{dfn:apo_var1}) when $U$ has multiple components}
\label{AppFig:vectorNotAPV}
 \end{figure}

\ref{AppFig:vectorNotAPV} presents an example where $U=(U_1,U_2)$ is a bivariate vector of independent variables. Assume $Z$ represents teacher assignment claimed to be quasi-random, $X$ is the actual teacher's value added, and $Y$ is test scores. The variable $W$ represents some unobserved student characteristic (e.g., their parents' involvement), which also correlates with teacher assignments, as some students switch classrooms after the original assignment. Let $U_1$ represent having basketball as a hobby and further assume that it is correlated with the IV; for example, one teacher also coaches basketball, so students who list basketball as a hobby are more likely to be assigned to her. However, as seen in \ref{AppFig:vectorNotAPV}, a basketball hobby is independent of test scores ($U_1 \indep Y(x)$). 
Let $U_2$ represent having math as a hobby. Students reporting math as a hobby tend to perform better in exams and are randomly distributed between teachers ($U_2 \indep Z$). The basketball and math hobbies are independent ($U_1 \indep U_2$). Finally, assume that $N$ is participation in an extracurricular basketball program, serving as a proxy for $U$ (specifically for $U_1$).

Although the vector $U$ is associated with both the IV and the outcome, it does not qualify as an APO variable because it does not satisfy path indication. The IV satisfies outcome independence and the exclusion restriction assumptions ($Z \indep Y(x)$) despite being correlated with the list of hobbies ($Z \cancel{\indep} U$).  
Therefore, $N$ is not a proper NCO. Even though $N \indep Z |U$, it is still not an NCO because $U$ is not an APO variable.

\subsection{Potential Alternative Path Variables and Association with the Treatment}
\label{AppSubSec:APvarAndTreatmnet}
\beginQuotable[AppFigAPvarAssociatedTreatment]
\begin{figure}[htp!]
\bigskip
\centering
\subfloat[][$NC\cancel{\indep}Z$ implies violation of IV validity.  $U$ is an APO variable. $NC$ is an NCI.]{
\begin{tikzpicture}
\node (1) {$Z$};
\node [above right = 0.75cm and 0.25cm of 1](2) {$W$};
\node [right =   1cm of 1](3) {$X$};
\node [right =   1cm of 3](4) {$Y$};
\node [left = 1cm of 1](5) {\color{red} $U$};
\node [below  = 0.5cm of 1](6) {\color{blue}$NC$};
\draw[Arrow] (1) -- (3);
\draw[Arrow] (2) -- (3);
\draw[Arrow] (2) -- (4);
\draw[Arrow] (3) -- (4);
\draw[DashedRedArrow] (5) to  (1);
\draw[Arrow] (5) to [out=25, in=145] (3);  
\draw[Arrow] (5) to [out=-25, in=-135] (4);
\draw[Arrow] (5) -- (6);
\end{tikzpicture}\hspace{2em}}
\hspace{3em}
 \subfloat[][$N\cancel{\indep}Y|Z$ does not necessarily implies violation of IV validity. $U$ is not an API variable. $N$ is not an NCO.]{
\begin{tikzpicture}
\node (1) {$Z$};
\node [above right = 0.75cm and 0.25cm of 1](2) {$W$};
\node [right =   1cm of 1](3) {$X$};
\node [right =   1cm of 3](4) {$Y$};
\node [left = 1cm of 1](5) {\color{red} $U$};
\node [below  = 0.5cm of 1](6) {\color{blue}$N$};
\draw[Arrow] (1) -- (3);
\draw[Arrow] (2) -- (3);
\draw[Arrow] (2) -- (4);
\draw[Arrow] (3) -- (4);
\draw[Arrow] (5) to  (1);
\draw[Arrow] (5) to [out=25, in=145] (3);
\draw[DashedRedArrow] (5) to [out=-25, in=-135] (4);
\draw[Arrow] (5) -- (6);
\end{tikzpicture}\hspace{2em}}

\caption{Direct effect of the potential alternative path variables on the treatment}
\label{AppFig:APvarAssociatedTreatment}
\begin{minipage}{\textwidth}
\medskip
\endQuotable{}
\footnotesize
\end{minipage}
\end{figure}
Path indication for API variables (Definition \ref{dfn:api_var1}) implies that conditional on the IV ($Z$), an API variable ($U$) cannot be associated with the treatment ($X$). By contrast, there is no such requirement for an APO variable. \ref{AppFig:APvarAssociatedTreatment} illustrates these points. In Panel A, $U$ is a valid APO variable, and the arrow $U\rightarrow X$ is allowed: both latent IV validity and path indication hold. Therefore, $NC$ is a valid NCO, and an association between $NC$ and $Z$ implies that the dashed arrow between $U$ and $Z$ exists and indicates that the IV is invalid. Conversely, in Panel B, $U$ is not a valid API variable. Path indication is violated because $U \cancel{\indep} Y|Z$ does not imply $Z\indep Y(x)$. Therefore, $N$ is not an NCI, and $N\cancel{\indep}Y|Z$ does not necessarily imply that the IV is invalid. Intuitively, $N \cancel{\indep} Y|Z$, even if the IV design is valid, due to the association between $U$ and $Y$ through $X$. Conditioning on $X$ would not solve this problem because $X$ is a collider. Therefore, $N \cancel{\indep} Y|Z,X$ due to the path $N\leftarrow U \rightarrow X \leftarrow W \rightarrow Y$ \citep{pearl2009causality}.

\subsection{Counterexample: A Vector of NCOs That is Not an NCO}
\label{AppSubSec:NCVecNotNC}

Let $R_1,R_2$ be two independent Bernoulli random variables with probabilities $\Pr(R_j=1)=p_j$ and let $p_1=p_2 =0.5$. Let $U$ be another Bernoulli random variable, independent of $(R_1, R_2)$. Let $Z$ be the IV, and assume that  
$$
Z=(R_1 \oplus R_2)+ \theta U + \epsilon_Z,
$$ 
where $\oplus$ is the XOR operator. Assume that $ Y(x) = x + U+\epsilon_Y$, such that $U$ is an APO variable. The IV design is valid if $\theta = 0$. 

Now, assume that there are two observed negative controls $NC_i  = U \oplus R_i$ for $i = 1,2$. Both $NC_1$ and $NC_2$ are valid negative controls as they satisfy the assumption $NC_i \indep Z | U$. This is because for $i = 1,2$, $R_i \indep (R_1 \oplus R_2)$, and therefore  $Z \indep R_i | U$. However $(NC_1, NC_2)\cancel{\indep} Z|U$ because, conditional on $U$, $Z$ is associated with $NC_1 \oplus NC_2 = R_1 \oplus R_2$. Therefore, $(NC_1, NC_2)$ does not satisfy the NCO assumption. Indeed, even if the IV is valid, we could still have $Z \cancel{\indep} (NC_1, NC_2)$.

A small change in the data-generating process will break some of the independencies discussed above. For example, changing the value of $p_1$ to something different from 0.5 would imply that $R_2 \cancel{\indep} (R_1 \oplus R_2)$. In that case, $NC_2 \cancel{\indep} Z$ and $NC_2$ would no longer satisfy the NCO assumption.  
\addtocounter{figure}{1}

\subsection{Multiple Threats}\label{appSec:multiple_APO_Var}

\begin{figure}[htb!]
\begin{tikzpicture}
\node (1) {$Z$};
\node [right =   1cm of 1](2) {$X$};
\node [above left = 0.5cm of 2](3) {$W$};
\node [right =   1cm of 2](4) {$Y$};
\node [above left = 0.5cm and 1cm of 1](5) {\color{red} $U$};
\node [below left = 0.5cm and 1cm of 1](6) {\color{red} $V$};

\draw[Arrow] (1) -- (2);
\draw[Arrow] (3) -- (2);
\draw[Arrow] (3) -- (4);
\draw[Arrow] (2) -- (4);
\draw[DashedRedArrow] (5) to  (1);
\draw[DashedRedArrow] (6) to  (1);
\draw[Arrow] (5) to [out=25, in=120] (4.north);
\draw[Arrow] (6) to [out=-25, in=-120] (4.south);

\end{tikzpicture}\hspace{2em}
   \centering
    \caption{Multiple threats}
    \label{fig:multiple_APO_Var}
\end{figure}
\ref{fig:multiple_APO_Var} presents an example of the presence of multiple threats to the validity of the IV design. 
 In this case, the variable $U$ is an APO variable by Definition \ref{dfn:app-apo} (taking the control variables $C$ to be an empty set). In this figure, $V$ is an APO variable as well.

For example, assume that $Z$ is the teacher assignment, which is claimed to be quasi-random, $X$ is the value added of the actual teacher, and $Y$ is test scores. The variable $W$ represents some unobserved student characteristic (e.g., their parents' involvement), which also correlates with teacher assignments, as some students switch classrooms after the original assignment. Assume that $U$ is the student's unobserved ability. Assume also that $V$ is principal quality, which is also unobserved. Both $U$ and $V$ might affect the teacher allocation $Z$, which would generate an alternative path between the IV and the outcome. 

\subsection{Direct IV Link Rules out Proxies of $V$}
\label{AppSubSec:VioloationProxies}

\begin{figure}[ht!]
 \begin{center}
 	\begin{tikzpicture}
\node (1) {$V$};
\node [above right=0.75cm and 1cm of 1](2) {$Z$};
\node [right =  1.5cm of 2](3) {$X$};
\node [right =  1.5cm of 3](4) {$Y$};
\node [above left =  1cm of 3](5) {$W$};
\node [above =  1.5cm of 1](6) {$U_2$};
\node [left =  1.5cm of 1](7) {$U_1$};
\draw[DashedRedArrow] (1) -- (2);
     \draw[Arrow] (1) -- (6);
     \draw[Arrow] (2) -- (3);
     \draw[Arrow] (3) -- (4);
     \draw[Arrow] (5) -- (3);
     \draw[Arrow] (5) -- (4);
     \draw[Arrow] (7) -- (1);
     \draw[Arrow] (1) -- (4);
 	\end{tikzpicture}
 	 \end{center}
\caption{Violation of direct IV link (Definition \ref{dfn:app-apo})}
\label{AppFig:ViolationProxyDirectIVlink}
 \end{figure}

\ref{AppFig:ViolationProxyDirectIVlink} presents the potential violation of outcome independence through $V$, as well as two proxies for $V$, $U_1$, and $U_2$. Note that latent IV validity, as stated in Definition \ref{dfn:app-apo}, holds for either variable ($U_1$ or $U_2$) together with $V$, as the further conditioning on $U_1$ or $U_2$ does not invalidate the IV, conditional on $V$. 
Note also that for both variables, path indication holds because if $Z\indep Y(x)|V$ then $Z\indep U_1|V$ (or $Z\indep U_2|V$). 
However, condition 3 of Definition \ref{dfn:app-apo}, direct IV link, does not hold. If the IV design is invalid (the dashed line exists), then $Z\cancel{\indep} U_1$ while $Z\indep U_1| V$ (and similarly for $U_2$).
Intuitively, we rule out $U_1$ and $U_2$ as APO variables because they are only proxies for the variable creating the threat to IV validity.

\subsection{Path Indication Rules Out Proxies of $V$}
\label{AppSubSec:VioloationProxiesPathIndication}

\begin{figure}[ht!]
 \begin{center}
 	\begin{tikzpicture}
 	\node (2) {$Z$};
  	\node [below right=0.75cm and 1cm of 2](1) {$V$};
    \node [right =  1.5cm of 2](3) {$X$};
    \node [right =  1.5cm of 3](4) {$Y$};
    \node [above left =  1cm of 3](5) {$W$};
    \node [left =  1.5cm of 1](7) {$U$};
     \draw[DashedRedArrow] (2) -- (1);
     \draw[Arrow] (2) -- (3);
     \draw[Arrow] (3) -- (4);
     \draw[Arrow] (5) -- (3);
     \draw[Arrow] (5) -- (4);
     \draw[Arrow] (7) -- (1);
     \draw[Arrow] (1) -- (4);

    
 	\end{tikzpicture}
 	 \end{center}
\caption{Violation of path indication (Definition \ref{dfn:app-apo})}
\label{AppFig:ViolationProxyPathIndication}
 \end{figure}
 
\ref{AppFig:ViolationProxyPathIndication} presents a potential violation of exclusion restriction through the variable $V$, as well as a proxy for $V$, labeled as $U$. While $V$ itself is an APO, its proxy $U$ is not. Note that latent IV validity holds for $U,V$ jointly, as the further conditioning on $U$ does not invalidate the IV design once we have conditioned on $V$. Note also that direct IV link holds because $Z\indep U$. However, $U$ is not an APO variable because it does not represent a threat to IV validity. Condition~2 of Definition~\ref{dfn:app-apo}, namely, path indication, does not hold. Specifically, if the IV design is invalid (the dashed line exists), $Z\cancel{\indep} U|V$ while $Z\indep Y(x)| V$. In the language of DAG terminology, $V$ is a \textit{collider} \citep{pearl2009causality}, and conditioning on it creates a dependence between $U$ and $Z$.

\subsection{Violation of V-validity}
\label{AppSubSec:ViolationVvalidityColider}

\begin{figure}[ht!]
 \begin{center}
 	\begin{tikzpicture}
  	\node (1) {$U$};
 	\node [above right=0.75cm and 1cm of 1](2) {$Z$};
    \node [right =  1.5cm of 2](3) {$X$};
    \node [right =  1.5cm of 3](4) {$Y$};
    \node [above left =  1cm of 3](5) {$W$};
    \node [right =  1.5cm of 4](6) {$V$};
    \draw[Arrow] (1) -- (2);
    \draw[Arrow] (1) -- (6);
    \draw[Arrow] (2) -- (3);
    \draw[Arrow] (3) -- (4);
    \draw[Arrow] (5) -- (3);
    \draw[Arrow] (5) -- (4);
    \draw[Arrow] (4) -- (6);

    
 	\end{tikzpicture}
 	 \end{center}
\caption{Violation of $V$-validity}
\label{AppFig:ViolationVvalidity}
 \end{figure}
\ref{AppFig:ViolationVvalidity} presents a situation with no valid APO variable. We examine $U$ as a candidate APO variable and consider $V$ in the DAG as the potential $V$ in Definition \ref{dfn:app-apo}. We see that latent IV validity holds: while $Z \cancel{\indep} Y(x) | V$, because $V$ is a common effect (a collider) of $U$ and $Y$, controlling for $U$ in addition to $V$ blocks the flow of association \citep{pearl2009causality}, resulting in $Z \indep Y(x) | U, V$. Path indication holds because $Z \cancel{\indep} Y(x) | V$. Direct IV link also holds because of the effect of $U$ on $Z$. However, it is clear $U$ should not be an APO variable. An association between $Z$ and $U$ does not imply that the IV design is invalid. This is where $V-$validity comes to the rescue. The IV satisfies $Z \indep Y(x)$, but, as previously noted $Z \cancel{\indep}
Y(x)|V$, due to $V$ being a common effect of both variables.
In this case, no other alternative to $V$ exists to satisfy Definition \ref{dfn:app-apo}. Therefore, $U$ is not an APO variable.

\subsection{NCO Potentially Affecting the IV} \label{AppSubSec:NCOaffectZ}

\begin{figure}[htbp!]
\bigskip
\centering
\subfloat[][$Z\indep U$]{
\begin{tikzpicture}
\node (1) {$Z$};
\node [above right = 0.75cm and 0.25cm of 1](2) {$W$};
\node [right =   1cm of 1](3) {$X$};
\node [right =   1cm of 3](4) {$Y$};
\node [left = 1cm of 1](5) {\color{red} $U$};
\node [above right = 0.75 and 0.2cm of 5](6) {\color{blue}$NC$};
\draw[Arrow] (1) -- (3);
\draw[Arrow] (2) -- (3);
\draw[Arrow] (2) -- (4);
\draw[Arrow] (3) -- (4);

\draw[Arrow] (5) to [out=-25, in=-135] (4);
\draw[Arrow] (5) -- (6);\end{tikzpicture}\hspace{2em}}
\hspace{3em}
 \subfloat[][$Z \cancel{\indep} U$ ]{

\begin{tikzpicture}
\node (1) {$Z$};
\node [above right = 0.75cm and 0.25cm of 1](2) {$W$};
\node [right =   1cm of 1](3) {$X$};
\node [right =   1cm of 3](4) {$Y$};
\node [left = 1cm of 1](6) {$U$};
\node [above right = 0.75 and 0.2cm of 6](7) {\color{blue}$NC$};
\draw[Arrow] (1) -- (3);
\draw[Arrow] (2) -- (3);
\draw[Arrow] (2) -- (4);
\draw[Arrow] (3) -- (4);
\draw[Arrow] (6) to [out=-25, in=-135] (4);
\draw[RedArrow] (6) -- (1);
\draw[Arrow] (6) -- (7);
\draw[RedArrow] (7) -- (1);
\end{tikzpicture}
\hspace{2em}}
\caption{Conditional dependence between an NCO and the IV}
\label{AppFig:directNC-ZLink}
\begin{minipage}{\textwidth}
\medskip
\end{minipage}
\end{figure}

\ref{AppFig:directNC-ZLink} presents a scenario in which if the IV is invalid, it could also be associated with the NCO, not through the APO variable. For concreteness, consider the case of studying the effect of teacher quality ($X$) on test scores ($Y$). The IV ($Z$) is claimed to be a random assignment of teachers. Unobserved ability ($U$) is the APO variable. In the case of random assignment, ability has no association with the IV (Panel A). However, there is a concern that the initial assignment was not random in practice. In Panel B, random assignment did not take place, and so other considerations could have impacted the IV, including unobserved ability $U$. Moreover, it is possible that proxies for unobserved ability, such as lagged test scores ($NC$), were used directly in the assignment process as well. In this case, $NC \cancel{\indep} Z | U$, and so the lagged outcome does not satisfy Definition \ref{dfn:NCO_var}. However, if the IV design is valid, and $Z \indep U$, the condition $NC \indep Z | U$ is satisfied. Hence, $NC$ is an NCO based on the broader Definition \ref{dfn:NCO_var_controls}, defined in \ref{AppSubSubSec:NCOproofs}. Indeed, in this case, if $NC \cancel{\indep} Z$, the design is invalid ($Z \cancel{\indep} Y(x)$).

\section{Details of Implementation of Negative Control Tests Using Data from Prior Studies}
\label{appendix_practice}

This section provides additional details for the analysis from Section~\ref{sec:practice}, which implements our proposed methods on IV designs used in prior studies. \ref{tab_applications_intro} summarizes information about the key variables in each study. We are grateful to the authors of these prior studies for publicly posting their data and code. In each case, we first used the publicly posted data to replicate the related original study's results (this step is not further discussed here). We then applied our additional negative control falsification tests. Replication code and data are included in the supplementary materials.

\subsection{Implementation Details for \citet{autor2013china}} \label{SubSec:ADHdetails}

\textbf{Sample Construction.} For this analysis, we use the original study's data from \citet[henceforth ADH]{autor2013china}, which is taken from the US Census. The unit of analysis is a commuting zone. The sample included 722 commuting zones.

\textbf{Main Variables.} For each commuting zone, we observe all variables from the original study's replication data and additional variables not used in the original study, some of which we use as NCOs in our current analysis. The treatment and IV are built as shift-share variables, weighting change in Chinese import by industry where weights are the local industry shares in the commuting zone. The treatment uses Chinese imports in the US and the IV uses Chinese imports in other developed countries to avoid endogeneity.
We focus on the analysis for the years 2000--2007. The treatment and IV are the shift-share difference in Chinese imports between the years 2007 and 2000. The control variables are the lagged year 2000 values. Note that ADH also used another version of the IV, measured between 1990--2000. We do not evaluate this version because it would not allow us to use the large set of variables from 1990 as NCOs.

\textbf{Original Falsification Tests.} ADH conducted falsification exercises to evaluate the concern that the decline in US manufacturing employment in commuting zones with high exposure to Chinese imports might have occurred for reasons unrelated to Chinese imports. They regress past changes in the manufacturing employment share on future changes in import exposure (See Columns (4)--(6) of Table 2 in ADH). This relationship was found to be significant only for 1970--1980, but not for 1980--1990 or 1970--1990. The significant specification yielded a coefficient with the opposite sign. We replicated this analysis and obtained a similar result. The $p$-value is reported in Column (1) of our \ref{tab_applications_nco}. This original exercise is similar in spirit to our proposed approach, although it uses the different negative controls separately and not jointly. It also uses a 2SLS specification for estimation, not the reduced form.

The remainder of this section discusses additional falsification tests that we performed using alternative negative control variables sourced from the original replication data. 

\textbf{Additional NCOs.} We use 52 NCOs in our falsification analysis. These include the NCOs that were originally used by ADH (lagged changes in manufacturing employment) and all variables measuring labor market conditions in 1990. In particular, we include the share of workers who were employed in manufacturing, employed in non-manufacturing, unemployed, and not in the labor force, separately for males, females, college educated, non-college educated, and for three different age groups; the share who received SSDI; average log weekly wages in manufacturing and in non-manufacturing; average household total income and average household wage; total population and size of the workforce; levels of transfers per capita for medical benefits, federal income assistance, unemployment benefits, TAA benefits, education/training assistance, SSA retirement benefits, SSA disability benefits, other assistance, and total individual transfers. 

\textbf{Implementation Details.} We use the same sampling weights used by ADH in the original study (\texttt{timepwt48}). We also follow ADH and cluster standard errors by states (\texttt{statefip}). 

In Column (1) of \ref{tab_applications_nco}, we use a single NCO that was used in the original analysis, namely the change in manufacturing employment between 1970--1980.
We replicated the ADH analysis, which regressed past outcomes (1970) on future treatments (years 1990 and 2000 averaged), instrumented by future IVs (see Column (4) of Table 2 in ADH). We report the $p$-value of the coefficient on the treatment. In Column (2), we perform a similar analysis by regressing the 1970 outcome on the year 2000 IV (e.g., reduced form), including the full set of 16 control variables (as in Column (6) of Table 3 in ADH).

\subsection{Implementation Details for \cite{deming2014using}}\label{SubSec:DemingDetails}

\textbf{Sample Construction.} We use the original study's data from a public school choice lottery in Charlotte-Mecklenburg, North Carolina. The unit of analysis is the individual student. The sample includes 2,343 students. 

\textbf{Main Variables.} We use Deming's VAM estimates from the mixed-effects specification, controlling for past test scores.\footnote{The original study included richer specifications (models 3--4 in the original study) that controlled for individual characteristics, which were not made publicly available due to privacy constraints.}
Based on the replication code, we can write the IV as \begin{equation}
IV_{i}=L_{i}VAM_{i}^{1}+\left(1-L_{i}\right)VAM_{i}^{N}\label{Eq:deming IV}
\end{equation}
where $L$ is the binary school lottery outcome, $VAM^{1}$ is the value added of the first-choice school, and $VAM^{N}$ is the value added of the default neighborhood school. These variables are included in the original study's replication data.

Control variables include lagged test scores from the year 2001--2002 as well as lottery fixed effects (i.e., a categorical variable for every choice of school ranking). Following \cite{deming2014using}, the test scores include the math and reading test scores in nominal, quadratic, and cubic values, and an indicator of missing values.

\textbf{NCOs.} The original study did not report any falsification tests. We perform falsification analysis using lagged test scores from earlier school years (1998--2001) that were included in the replication data but not included as controls in the study (see the control variables definition) and lagged outcome (\texttt{testz2002}; i.e., 2002 test scores). We also used the VAM of the three schools that the student applied to in the lottery and the neighborhood school's VAM. In total, we used $37$ NCOs.

\textbf{Implementation Details.} Following the original paper, all our analyses are unweighted. In the analysis with a single NCO, we replace the outcome with lagged test scores (from 2001--2002) in the reduced form.
In the F-test and multiple linear tests with Bonferroni correction, we perform a fixed-effect regression of the IV on the NCOs with the \texttt{lottery\_FE} variable. In the GAM models, fixed effects are accounted for by taking \texttt{lottery\_FE} as a categorical variable without a smooth term.

\textbf{Additional Analysis.} In \ref{corplot_lott_VA.pdf}, we show the correlation of each NCO with the outcome and the IV. Before calculating each correlation, we residualized the NCO and the IV or the outcome by the control variables. 

In an unreported analysis, we replicated the main 2SLS results using $L_i$, the raw lottery outcome, as an alternative IV. The point estimates remained statistically unchanged, although standard errors were larger. 

\subsection{Implementation Details for \citet{nunn2014us}}
\label{SubSec:NunnQianDetails}

\textbf{Sample Construction.} We use the study data, which consists of annual panel data of 125 non-OECD countries over 36 years. The sample includes 4,572 observations.

\textbf{Main Variables.} The IV of the study is the US wheat production from the previous year. We limit our analysis to the main outcome variable of the study, which is the intrastate conflict indicator. We utilize the extended set of $238$ control variables (as in the ``baseline specification'' in Table~2 of \citet{nunn2014us}). 

\textbf{NCIs.} As in the original study, we used a set of ten NCIs. The NCIs are the lagged US production of various products that are not sent as foreign aid. 

\textbf{Original Falsification Tests.} \citet{nunn2014us} performed a falsification test (Table~5 in \citeauthor{nunn2014us}) with the aforementioned NCIs by estimating the reduced form equation
\begin{equation*}
    Y_i = NCI^j_i + IV_i + C_i + \epsilon_i
\end{equation*} for each of the ten $NCI^j$ and the ``baseline specification'' of the control variables.

\textbf{Implementation Details.} In all analyses, we follow \citet{nunn2014us} and cluster standard errors by country.

\textbf{Additional Analysis.} We can also implement a GAM model with linear controls. This test rejects the null hypothesis. The rejection is driven at least in part by a violation of the unnecessary CSRF Assumption (Assumption \ref{ass:linear_RF}). To test the functional form, we implement Ramsey's RESET test for misspecification with quadratic and cubic fitted values for the reduced form equation. This test results in a $p$-value lower than $1\%$, implying a misspecification. However, the large number of control variables does not allow for estimating a GAM model with smooth controls as well or for including interactions of the control variables. Therefore, we cannot assess IV validity separately. 

\subsection{Implementation Details for \citet{ashraf2013out}}
\label{SubSec:AshrafGalorDetails}

\textbf{Sample Construction.} The study data consists of a sample of 145 countries.

\textbf{Main Variables.} The outcome of the study is the historical population density, which is defined as the log population density in 1500 CE. The main IV is the migratory distance from Addis Ababa. We use the same set of four control variables included in the study.

\textbf{NCIs.} We use the same three NCIs as in the original study, which are the migratory distances from London, Tokyo, and Mexico City. 

\textbf{Implementation Details.} We follow \citet{ashraf2013out} and include a quadratic polynomial for both the IV and the NCIs. 
\clearpage\end{appendices}

\end{document}